\newcommand{\BILLE}{\makebox{\textsc{Bill-E}}\xspace}
\newcommand{\ARMADAS}{\makebox{\textsc{Armadas}}\xspace}
\newcommand{\probName}{{\textsc{Single Robot Reconfiguration}}\xspace}
\newcommand{\BigO}{\mathcal{O}}
\newcommand{\OPT}{\textsf{OPT}\xspace}
\newcommand{\configs}{\ensuremath{\mathcal{C}}}
\newcommand{\bipGraph}[1]{G_{#1}}
\newcommand{\sumcomps}[1]{F_\Sigma}
\DeclarePairedDelimiter{\norm}{\lvert}{\rvert}
\newcommand{\eDist}[1]{d_E(#1)}
\newcommand{\cDist}[1]{d_C(#1)}
\newcommand{\weight}[1]{w(#1)}
\newcommand{\minMatching}{M_{\OPT}}
\newcommand{\lowerboundOf}[1]{\ensuremath{\sigma(#1)}}
\title{Efficient Reconfiguration of Tile Arrangements by a Single Active Robot}
\titlerunning{Efficient Reconfiguration of Tile Arrangements by a Single Active Robot}
\author{Aaron T. Becker}{Electrical Engineering, University of Houston, Texas, USA}{atbecker@uh.edu}{https://orcid.org/0000-0001-7614-6282}{}
\author{Sándor P. Fekete}{Computer Science, TU Braunschweig, Germany}{s.fekete@tu-bs.de}{https://orcid.org/0000-0002-9062-4241}{}
\author{Jonas Friemel}{Electrical Engineering and Computer Science, Bochum University of Applied Sciences, Germany}{jonas.friemel@hs-bochum.de}{https://orcid.org/0009-0009-6270-4779}{}
\author{Ramin Kosfeld}{Computer Science, TU Braunschweig, Germany}{kosfeld@ibr.cs.tu-bs.de}{https://orcid.org/0000-0002-1081-2454}{}
\author{Peter Kramer}{Computer Science, TU Braunschweig, Germany}{kramer@ibr.cs.tu-bs.de}{https://orcid.org/0000-0001-9635-5890}{}
\author{Harm Kube}{Computer Science, TU Berlin, Germany}{h.kube@campus.tu-berlin.de}{https://orcid.org/0009-0001-5072-7908}{}
\author{Christian Rieck}{Discrete Mathematics, University of Kassel, Germany}{christian.rieck@mathematik.uni-kassel.de}{https://orcid.org/0000-0003-0846-5163}{}
\author{Christian Scheffer}{Electrical Engineering and Computer Science, Bochum University of Applied Sciences, Germany}{christian.scheffer@hs-bochum.de}{https://orcid.org/0000-0002-3471-2706}{}
\author{Arne Schmidt}{Computer Science, TU Braunschweig, Germany}{aschmidt@ibr.cs.tu-bs.de}{https://orcid.org/0000-0001-8950-3963}{}
\authorrunning{Becker, Fekete, Friemel, Kosfeld, Kramer, Kube, Rieck, Scheffer, Schmidt}
\keywords{Programmable matter, Reconfiguration, Polyominoes, Assembly, Path planning, Approximation, NP-completeness}
\begin{document}
    \maketitle
    \begin{abstract}
        We consider the problem of reconfiguring a two-dimensional connected grid arrangement of passive building blocks from a start configuration to a goal configuration, using a single active robot that can move on the tiles, remove individual tiles from a given location and physically move them to a new position by walking on the remaining configuration.
        The objective is to determine a schedule that minimizes the overall makespan, while keeping the tile configuration connected.
        We provide both negative and positive results.
        (1) We generalize the problem by introducing weighted movement costs, which can vary depending on whether tiles are carried or not, and prove that this variant is \NP-hard.
        (2) We give a polynomial-time constant-factor approximation algorithm for the case of disjoint start and target bounding boxes, which additionally yields optimal carry distance for 2-scaled instances.
    \end{abstract}

    \section{Introduction}\label{sec:introduction}
Building and modifying structures consisting of many basic components is an
important objective, both in fundamental theory and in a spectrum of practical settings.
Transforming such structures with the help of autonomous robots is
particularly relevant in very large~\cite{2020-Coordinating_IWOCA} and
very small dimensions~\cite{Yangsheng:five_mm_robot}
that are hard to access for direct human manipulation, e.g.,
in extraterrestrial space~\cite{BENLARBI20213598,jenett2017design} or in microscopic environments~\cite{bfh+-cgur-17}.
This gives rise to the natural algorithmic problem of rearranging a
given start configuration of many \emph{passive} objects
by a small set of \emph{active} agents to a desired target configuration.
Performing such reconfiguration at scale faces a number of critical challenges,
including (i)~the cost and weight of materials, (ii)~the potentially disastrous
accumulation of errors, (iii)~the development of simple yet resilient agents to carry
out the active role, and (iv)~achieving overall feasibility and~efficiency.

In recent years, significant advances have been made to deal with these difficulties.
Macroscopically, ultra-light and scalable composite lattice materials~\cite{Cheung1219,ultralight24,gregg2018ultra} tackle the first problem, making it possible to construct modular, reconfigurable structures with platforms such as NASA's \BILLE and \ARMADAS~\cite{ultralight24,jenett2017bille,jenett2016meso};
the underlying self-adjusting lattice also resolves the issues of accuracy and error correction (ii),
allowing it to focus on discrete, combinatorial structures,
consisting of regular tiles (in two dimensions) or voxels (in three dimensions).
A further step has been the development of simple autonomous
robots~\cite{9836082,jenett2019material}
that can be used to carry out complex tasks (iii),
as shown in~\cref{fig:intro-bille}: The robot can move on the tile arrangement, remove
individual tiles and physically relocate them to a new position by walking on
the remaining configuration, which needs to remain connected at all times.
At microscopic scales, advances in micro- and nanobots~\cite{santos2022light,zhang2019robotic} offer novel ways to (re)configure objects and mechanisms, e.g., assembling specific structures or gathering in designated locations for tasks like targeted drug delivery~\cite{2020-Targeted_ICRA,2022-gather_IROS}.

\begin{figure}
    \centering
    \includegraphics[scale=0.23]{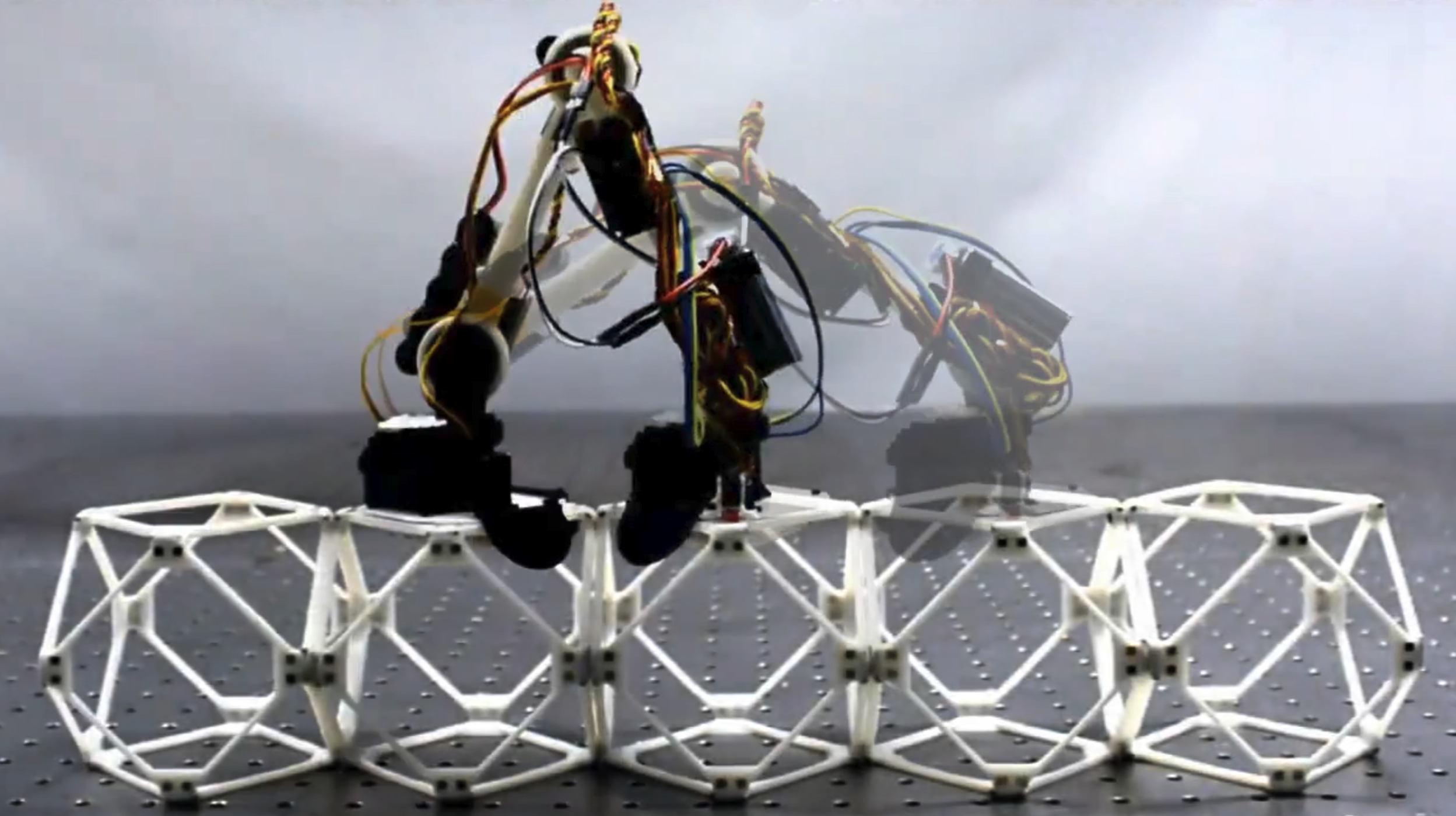}
    \caption{A \BILLE robot moves on a configuration of light-weight
material and relocate individual voxels for overall reconfiguration. Photo adapted from~\cite{jenett2019material}.}
    \label{fig:intro-bille}
\end{figure}

In this paper, we deal with challenge (iv):
How can we use such a robot to transform a given start configuration into a desired goal arrangement, as quickly as possible?

\subsection{Our contributions}\label{subsec:our-results}

We investigate the problem of finding minimum cost reconfiguration schedules for a single active robot operating on a (potentially large) number of tiles, and give the following results.
\begin{enumerate}[(1)]
	\item {We present a generalized version of the problem, parameterized by movement costs, which can vary depending on whether tiles are carried or not,
	and show that this is \NP-hard.}
	\item {We give a polynomial-time constant-factor approximation algorithm for the case of disjoint start and target bounding boxes.
	Our~approach further yields optimal carry distance for $2$-scaled instances.}
\end{enumerate}

\subsection{Related work}
\label{subsec:related-work}

Recently, the authors of~\cite{single-bille-reconfig-IROS,cooperative-bille-reconfig-ICRA} showed that computing optimal schedules for \BILLE bots, see~\cite{jenett2017bille} and \cref{fig:intro-bille}, is \NP-hard in unweighted models.
They designed a heuristic approach that exploits rapidly exploring random trees~(RRT) and a time-dependent variant of the A$^{*}$ algorithm, as well as target swapping heuristics to reduce the overall distance traveled for multiple robots.
The authors of~\cite{cheung2025assembly} present an algorithm for computing feasible build orders that adhere to three-dimensional tile placement constraints in the \ARMADAS project.

A different context for reconfiguration arises from
programmable matter~\cite{daymude2019computing}.
Here, finite automata are capable of building bounding boxes of tiles around polyominoes, as well as scale and rotate them while maintaining connectivity at all times~\cite{fekete2022connected,NiesReconfig}.
On hexagonal grids, finite automata can build and recognize simple shapes such as triangles or parallelograms~\cite{gmyr2018recognition,gmyr2020forming,hinnenthal2024efficient} as well as more complicated shapes if they are able to recognize nodes that belong to the target shape~\cite{friemel2025reconfiguration}.

When considering active matter, arrangements composed of self-moving objects (or agents), numerous models exist~\cite{almethen2020pushing,almethen2022efficient,connor2025transformation,FeketeKKRS23-journal-connected,michail2019transformation}.
For example, in the \emph{sliding cube model} (or the \emph{sliding square model} in two dimensions) first introduced in 2003~\cite{FitchBR03,FitchBR05}, agents are allowed to slide along other, temporarily static, agents for reconfiguration, but must maintain connectivity throughout.
The authors of~\cite{AkitayaDKKPSSUW22} show that sequential reconfiguration in two dimensions is always possible, but deciding the minimal makespan is \NP-complete.
Recent work presents similar results for the three-dimensional setting~\cite{AbelAKKS24,KostitsynaOPPSS24}, as well as parallel movement~\cite{parallel-sliding-squares}.

In a relaxed model, the authors of~\cite{FeketeKKRS23-journal-connected,FeketeKRS022-journal-labeled-connected} show that parallel connected reconfiguration of swarms of (labeled) agents is \NP-complete, even for makespan~$2$, and present algorithms for schedules with constant \emph{stretch}; the ratio of a schedule's makespan to the individual maximum minimum distance between start and target.

\subsection{Preliminaries}
\label{subsec:preliminaries}

For the following, we refer to~\cref{fig:preliminaries}.
We are given a fixed set of~$n$ indistinguishable square \emph{tile} modules located at discrete, unique positions $(x,y)$ in the infinite integer grid~$\mathbb{Z}^2$.
If their positions induce a connected subgraph of the grid, where two positions are connected if either their $x$- or $y$-coordinate differs by $1$, we say that the tiles form a connected \emph{configuration} or \emph{polyomino}.
Let~$\configs(n)$ refer to the set of all polyominoes of $n$ tiles.

\begin{figure}[htb]
	\begin{subfigure}[t]{\columnwidth}
		\centering%
		\includegraphics[page=1]{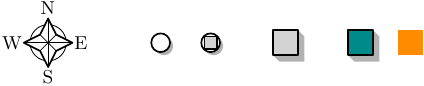}%
		\caption{Left to right: The cardinal directions, a robot without and with payload, a tile, and tiles that only exist in either the start or the target configuration, respectively.}
		\label{fig:legend}
	\end{subfigure}%
	\vspace{1em}
	\begin{subfigure}[t]{\columnwidth}
		\centering%
		\includegraphics[page=2]{intro-moves-colors}%
		\caption{A schedule for the instance on the left. The robot moves on tiles, picking up and dropping off adjacent~tiles.}
		\label{fig:example-schedule}
	\end{subfigure}%
	\caption{A brief overview of legal operations.}
	\label{fig:preliminaries}
\end{figure}

Consider a robot that occupies a single tile at any given time.
In discrete time steps, the robot can either move to an adjacent tile, pick up a tile from an adjacent position (if it is not already carrying one), or place a carried tile at an adjacent unoccupied position.
Tiles can be picked up only if connectivity is preserved.

We use cardinal directions; the unit vectors $(1,0)$ and $(0,1)$ correspond to \emph{east} and \emph{north}, respectively.
Naturally, their opposing vectors extend \emph{west} and \emph{south}.

A \emph{schedule} $S$ is a finite, connectivity-preserving sequence of operations to be performed by the robot.
As the robot's motion is restricted to movements on the polyomino, we refer to distances it travels as \emph{geodesic} distances.
Let~$\cDist{S}$ denote the \emph{carry distance}, which is the number of robot moves while carrying a tile (i.e., the sum of geodesic distances between consecutive pickups and drop-offs), plus the number of all pickups and drop-offs in $S$.
Accounting for the remaining moves without carrying a tile,
the \emph{empty distance} $\eDist{S}$ is the sum of geodesic distances between drop-offs and pickups in~$S$.
For $C_s, C_t \in \configs(n)$, we~say that~$S$ is a schedule for ${C_s\rightrightarrows C_t}$ exactly if it transforms $C_s$ into~$C_t$.

\subparagraph*{Problem statement.}
We consider \probName:
Given configurations ${C_s,C_t\in\configs(n)}$ and a rational weight factor ${\lambda\in[0,1]}$,
determine a schedule $S$ for $C_s\rightrightarrows C_t$ that minimizes the \emph{weighted makespan}
$\norm{S} \coloneqq \lambda\cdot\eDist{S} + \cDist{S}$.
We refer to the minimum weighted makespan for a given instance as~\OPT.

    \section{Computational complexity of the problem}
\label{sec:computational-complexity}
We investigate the computational complexity of the decision variant of the reconfiguration problem.
In particular, we prove that the problem is \NP-hard for any rational $\lambda\in[0,1]$.
This generalizes the previous result from~\cite{cooperative-bille-reconfig-ICRA}, which handles the case of~$\lambda=1$.

\begin{restatable}{theorem}{hardness}
	\label{thm:weighted-hardness}
	\probName is \NP-hard for any rational $\lambda$.
\end{restatable}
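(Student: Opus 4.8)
The plan is to reduce from a known \NP-hard problem whose structure mirrors the bottleneck in reconfiguration, namely routing a single robot through many forced pickups/drop-offs. The natural candidate is a Hamiltonicity-type or TSP-type problem on grid graphs, or (following the approach suggested by the cited result for $\lambda=1$) the problem used in~\cite{cooperative-bille-reconfig-ICRA}. First I would fix the reduction for the "clean" case $\lambda=1$ by recalling/adapting that construction: encode an instance of the source problem as a pair $(C_s,C_t)$ in which the robot must visit a prescribed set of "gadget" locations to relocate a small number of tiles, so that the optimal makespan is small if and only if the source instance is a yes-instance. The gadgets should be laid out so that the geodesic distances between consecutive mandatory operations encode the edge weights of the routing instance, and connectivity is never the obstruction (e.g., by keeping a large rigid "backbone" polyomino that is common to $C_s$ and $C_t$ and on which all travel happens).

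The new content is handling arbitrary rational $\lambda\in[0,1]$, including $\lambda=0$, where empty moves are free. The key idea here is that when empty travel is free, the only thing that counts in $\norm{S}$ is $\cDist{S}$: the total carry distance plus the number of pickups and drop-offs. So for small $\lambda$ I would redesign the gadgets so that the \emph{carry} distances — not the empty distances — encode the routing costs: each tile that must be moved has to be carried along a path whose length equals the corresponding edge weight, and the robot is forced (by connectivity and by the placement of source/target tiles) to perform these carries in some order that corresponds to a tour. For general $\lambda$ one can interpolate: scale the "carry-encoded" distances and the "empty-encoded" distances in the two parts of the construction so that, regardless of $\lambda\in[0,1]$, the dominant term in $\norm{S}=\lambda\,\eDist{S}+\cDist{S}$ still faithfully reproduces the source cost. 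A cleaner alternative is to make the construction work simultaneously for all $\lambda$ by ensuring that along any schedule, $\eDist{S}$ and $\cDist{S}$ are both minimized exactly by the schedules corresponding to optimal tours, so the convex combination is minimized by those schedules for every $\lambda$; then a single threshold on $\norm{S}$ (depending on $\lambda$, but computable in polynomial time from the instance) separates yes- from no-instances.

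In more detail, the steps I would carry out are: (i) choose the source problem (likely Hamiltonian path in grid graphs of bounded degree, or the exact variant used in~\cite{cooperative-bille-reconfig-ICRA}); (ii) build the backbone polyomino realizing the source graph's vertices as "rooms" and its edges as "corridors," with corridor lengths proportional to edge weights (all unit weights if the source problem is unweighted, which simplifies everything); (iii) design a per-vertex gadget containing one tile present in $C_s$ but not $C_t$ (or vice versa) whose relocation forces the robot to enter and leave that room, and arrange the target positions so that servicing all gadgets corresponds exactly to a Hamiltonian traversal; (iv) argue completeness — a Hamiltonian path yields a schedule of makespan at most some explicit bound $B(\lambda)$ — and soundness — any schedule of makespan at most $B(\lambda)$ induces a Hamiltonian path, using that skipping or revisiting rooms forces strictly larger carry or empty distance; (v) verify connectivity is maintained throughout (pick-ups only happen at gadget leaves, drop-offs into the intended slots), and that the whole construction has polynomial size and that $B(\lambda)$ has polynomial bit-length for rational $\lambda$.

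The main obstacle I expect is the case $\lambda=0$ and, more generally, making a \emph{single} construction serve \emph{all} rational $\lambda$ at once: when $\lambda$ is tiny, any slack in the empty-move accounting is invisible, so the reduction must force the robot's \emph{loaded} trajectory — not its free wandering — to trace out the tour, which is more delicate because carried tiles must be dropped somewhere valid and connectivity must survive every intermediate step. Getting the gadget geometry so that the forced carry paths have exactly the right lengths, cannot be "short-circuited" by clever reuse of tiles, and leave connectivity intact, is where the real work lies; the $\lambda=1$ endpoint and the interpolation are comparatively routine once that core gadget is correct.
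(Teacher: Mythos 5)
Your plan for $\lambda\in(0,1]$ is essentially the paper's: a reduction from \textsc{Hamiltonian Path} in grid graphs with vertex gadgets joined by corridors, where the gadget dimensions are scaled relative to $\nicefrac{1}{\lambda}$ so that the routing cost dominates. One detail you gloss over but would need: for small positive $\lambda$ the paper's edge gadget forces a pickup/drop-off pair (a temporary one-tile bridge) on the cheaper of two crossing paths, which is what couples the number of edge traversals to the carry term and keeps the threshold well-defined for every positive rational $\lambda$. That part of your outline is workable.

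The genuine gap is $\lambda=0$. Your proposal is to ``redesign the gadgets so that the carry distances encode the routing costs'' and to force the robot's loaded trajectory to trace out a tour. This cannot work: the robot carries only one tile at a time, so $\cDist{S}$ decomposes into independent point-to-point relocations (one per moved tile, each at least the geodesic source-to-target distance), and its total is order-independent --- it is lower-bounded by a minimum-weight matching between $C_s\setminus C_t$ and $C_t\setminus C_s$ regardless of the sequence in which tiles are serviced. All ordering/tour information lives in the empty segments between a drop-off and the next pickup, and those are exactly the moves that become free at $\lambda=0$. Hence no Hamiltonicity-style reduction survives; your ``cleaner alternative'' (one construction minimizing both distances simultaneously) fails for the same reason, since at $\lambda=0$ the carry distance alone must already be \NP-hard to optimize. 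The paper needs an entirely different reduction here: from \textsc{Planar Monotone 3Sat}, where hardness comes not from routing but from a binary choice of which of two equal-length carry paths each variable's tile takes (encoding a truth value), and from whether clause gadgets can exploit a temporarily parked tile to form a cycle that lets their tile be picked up without extra connectivity-repair cost. That combinatorial interaction between gadgets, not tour length, is what makes the $\lambda=0$ makespan threshold ($29m+9n$) achievable exactly for satisfiable formulas. Without an idea of this kind, your reduction does not cover $\lambda=0$, and the theorem as stated (``any rational $\lambda$'') is not established.
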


We distinguish between the two cases of (1) $\lambda \in (0,1]$ and (2) $\lambda = 0$.
For (1), we modify the construction from~\cite{cooperative-bille-reconfig-ICRA} to yield a slightly stronger statement, reducing from the \textsc{Hamiltonian path} problem in grid graphs.
The primary result is the following.
\begin{lemma}
	\label{lem:hardness-lambda-not-zero}
	\probName is \NP-hard for $\lambda\in (0,1]$.
\end{lemma}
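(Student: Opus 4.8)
\emph{Proof strategy.}
The plan is to reduce from \textsc{Hamiltonian path} in grid graphs, which is \NP-complete. Given a grid graph $G$ with vertex set $V$ and edge set $E$, $k \coloneqq \norm{V}$, fix a scale $c \in \mathbb{N}$ that is polynomial in $k$ and $1/\lambda$ (hence a constant for fixed $\lambda$) and build a polyomino $P$ that is a ``fattened'' copy of $G$: each vertex $v$ becomes a solid $c \times c$ block $B_v$ placed at $c\cdot(x_v,y_v)$, and each edge $\{u,v\}\in E$ becomes a width-$1$ corridor linking $B_u$ and $B_v$. Almost all of $P$ lies in both $C_s$ and $C_t$ and is therefore immovable; the only difference is that inside every block $B_v$ we install a ``switch'' --- a pocket holding one tile whose position in $C_s$ and in $C_t$ differs by a single unit step --- arranged so that (i)~this pocket tile is the only tile of $B_v$ that can be removed without disconnecting $P$, and (ii)~flipping the switch is a purely local operation of $\BigO(1)$ carry distance and $\BigO(1)$ empty distance that nonetheless requires the robot to enter $B_v$. (If the instance additionally prescribes the robot's starting tile, we place it in a distinguished block $B_r$ and instead reduce from the likewise \NP-complete prescribed-endpoint variant.) The decision question is whether some schedule $S$ for $C_s \rightrightarrows C_t$ satisfies $\norm{S} \le \mu$, for a threshold $\mu$ fixed next.

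For the ``if'' direction, from a Hamiltonian path $v_1,\dots,v_k$ of $G$ the robot starts in $B_{v_1}$, walks from $B_{v_i}$ to $B_{v_{i+1}}$ through the corresponding edge corridor, and flips each switch upon arrival; this uses exactly $k-1$ corridor traversals, $\BigO(kc)$ further empty moves inside blocks, and $\BigO(k)$ carry moves. We set $\mu$ to a tight upper bound on the cost of this canonical schedule. Because $\lambda \le 1$, a carrying move never costs less than an empty move, so the $k$ mandatory switch flips pin $\cDist{S}$ up to lower-order terms, and the discriminating quantity is $\lambda\cdot\eDist{S}$, which is driven by corridor traversals: one superfluous traversal costs an extra $\Theta(\lambda c)$, and we choose $c$ large enough relative to $1/\lambda$ that this exceeds the entire slack in $\mu$.

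For the ``only if'' direction, suppose $\norm{S}\le\mu$. Every block must be entered to flip its switch, so the sequence of blocks visited by $S$ is a walk in $G$ starting at whichever vertex contains the robot's start position, and passage between blocks is possible only through the edge corridors --- these are width-$1$ and hence rigid, the $k$ scattered switch tiles are far too few (and too confined) to assemble a new corridor of length $\Theta(c)$, and moving a switch tile creates no new adjacency between blocks, while carrying tiles along detours only adds cost since carrying is never cheaper than walking empty. Therefore $\eDist{S}$ is at least $\Theta(c)$ times the number of corridor traversals of that walk, and the budget $\mu$ forces a walk covering all $k$ blocks with only $k-1$ traversals, i.e., a Hamiltonian path of $G$. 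This completes the reduction.

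The main obstacle is making the ``only if'' case analysis airtight: certifying that the gadgets are rigid enough that the robot cannot, by exploiting the mobility of switch tiles or by carrying material along temporary detours, cover several blocks more cheaply than an honest Hamiltonian traversal would --- in particular that no schedule beats $\mu$ without inducing a genuine Hamiltonian path. The secondary technical point is calibrating $c$ and $\mu$ uniformly for every fixed rational $\lambda\in(0,1]$: as $\lambda\to 0$ the per-detour penalty $\Theta(\lambda c)$ shrinks and must be compensated by a correspondingly larger $c$, but since $\lambda$ is a fixed constant of the problem this only inflates constants and keeps the construction polynomial.
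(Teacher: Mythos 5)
Your proposal follows essentially the same route as the paper: a reduction from \textsc{Hamiltonian Path} in grid graphs, with a constant-size local task forcing the robot to visit every vertex gadget and edge gadgets of length polynomial in $1/\lambda$, calibrated so that the weighted cost of a single superfluous edge traversal exceeds the entire slack in the threshold; the main hand-waved point in both arguments is that building shortcut bridges between distant gadgets is never profitable. Two slips to repair: solid $c\times c$ blocks placed at $c\cdot(x_v,y_v)$ touch along full sides, so the spacing must strictly exceed the block size for the corridors to have length $\Theta(c)$ at all, and your condition~(i) cannot hold as stated since a solid block has many removable tiles (every corner) --- though it is not load-bearing; the paper's only substantive difference is that its edge gadget contains a one-tile bridge that forces a pickup/drop-off pair per crossing, which it uses for a further corollary rather than for the hardness itself.
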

\begin{proof}
	We reduce from the \textsc{Hamiltonian Path} problem in induced subgraphs of the infinite grid graph~\cite{ItaiPS82}, which asks us to decide whether there exists a path in a given graph that visits each vertex exactly once.
	We use a modified variant of the construction in~\cite{cooperative-bille-reconfig-ICRA}, defining our start and target configurations based on the gadgets from~\cref{fig:hardness-bridges} as follows.

	\begin{figure}[htb]
		\subcaptionsetup{justification=centering}%
		\begin{subfigure}[t]{0.3\columnwidth}
			\centering%
			\includegraphics[page=2]{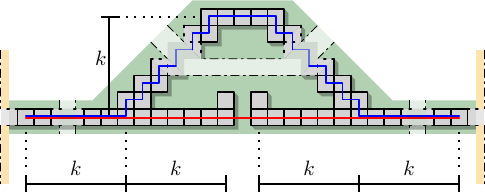}%
			\caption{The vertex gadget.}%
			\label{fig:hardness-bridges-vertex}%
		\end{subfigure}%
		\begin{subfigure}[t]{0.7\columnwidth}
			\centering%
			\includegraphics[page=1]{hardness-bridges}%
			\caption{An edge gadget; note the blue and red paths.}%
			\label{fig:hardness-bridges-edge}%
		\end{subfigure}%
		\caption{Our construction forces the robot to visit each vertex gadget at least once, traversing edge gadgets along either the red or the blue path.}%
		\label{fig:hardness-bridges}
	\end{figure}
	Let $k\gg \nicefrac{1}{\lambda}$.
	For each vertex of a given grid graph~$G_{\mathit{in}}$, we create a copy of a $7\times 7$ \emph{vertex gadget} that represents a constant-size, locally solvable task, see~\cref{fig:hardness-bridges-vertex}.
	We connect all adjacent vertex gadgets with copies of a $(k+1)\times (4k+3)$ \emph{edge gadget}, as depicted in~\cref{fig:hardness-bridges-edge}.
	Unlike~\cite{cooperative-bille-reconfig-ICRA}, our edge gadget offers two feasible paths, see~\cref{fig:hardness-bridges-edge}.

	The blue path offers a simple, walkable option of $6k+2$ empty travel units, while the red path is only available if the robot performs $2$ carry steps to construct and deconstruct a unit length bridge at the center, temporarily creating a walkable path of $4k+2$ units.
	This is cheaper exactly if $\lambda(6k+2)>\lambda(4k+2)+2$,
	i.e., if $k > \nicefrac{1}{\lambda}$.
	Since $\lambda$ is a rational number, we can pick a value $k$ that is polynomial in $\lambda$ and permits a construction in which taking the red path is always cheaper.
	The blue path thus serves solely for the purpose of providing connectivity and will never be taken by the robot.

	An optimal schedule will therefore select a minimal number of edge gadgets to cross, always taking the red path.
	As vertex gadgets always require the same number of moves to solve, we simply capture the total cost for a single vertex gadget as $t_g$.
	For a graph~$G_{\mathit{in}}$ with~$m$ vertices, an optimal schedule has makespan
	$m\cdot t_g + (m-1)\cdot (4k\lambda+2\lambda+2)$
	exactly if the underlying graph is Hamiltonian, and greater otherwise.

	Given a connected grid graph, such a schedule never builds bridges of length greater than one:
	Due to their pairwise distance, this will never be cheaper than crossing an edge gadget along the red path, see also~\cite{cooperative-bille-reconfig-ICRA}.
\end{proof}

As our reduction forces the robot to perform a pair of pickup and drop-off operations to efficiently travel between adjacent vertex gadgets, it immediately follows that deciding the necessary number of these operations in an optimal schedule is \NP-hard.

\begin{corollary}
	Given two configurations $C_s,C_t\in\configs(n)$ and an integer $k\in\mathbb{N}$, it is \NP-hard to decide whether there exists an optimal schedule $C_s\rightrightarrows C_t$ with at most~$k$ pickup (at most $k$ drop-off) operations, if ${\lambda \in (0,1]}$.
\end{corollary}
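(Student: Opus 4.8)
The plan is to reuse, essentially without change, the reduction built in the proof of \cref{lem:hardness-lambda-not-zero} (which applies for $\lambda\in(0,1]$, exactly the range of the corollary) and to read the pickup (respectively drop-off) count directly off its structure. Recall that from a grid graph $G_{\mathit{in}}$ with $m$ vertices that reduction produces an instance $(C_s,C_t,\lambda)$ in which an optimal schedule visits each of the $m$ vertex gadgets, solving its constant-size local task with a fixed number $p_v$ of pickups and $q_v$ of drop-offs, and moves between adjacent vertex gadgets only along red paths, each traversal of an edge gadget contributing a fixed number $p_e\geq 1$ of pickups and $q_e\geq 1$ of drop-offs for building and deconstructing the unit bridge; the blue path and longer bridges are strictly more expensive and hence never used (see \cref{fig:hardness-bridges}). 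As already established there, the number of edge-gadget traversals performed by an optimal schedule equals the minimum length of a walk in $G_{\mathit{in}}$ visiting every vertex, which is $m-1$ if $G_{\mathit{in}}$ has a Hamiltonian path and at least $m$ otherwise.

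Given this, I would have the reduction output the same instance $(C_s,C_t,\lambda)$ together with the bound $k \coloneqq p_v\cdot m + p_e\cdot(m-1)$ (and $q_v\cdot m + q_e\cdot(m-1)$ for the drop-off variant), which is polynomial-time computable since $p_v,p_e,q_v,q_e$ are constants fixed by the two gadgets. First I would check the easy direction: if $G_{\mathit{in}}$ has a Hamiltonian path $P$, the schedule that follows $P$, solving each vertex gadget and crossing the $m-1$ edge gadgets of $P$ once each via their red paths, is optimal by \cref{lem:hardness-lambda-not-zero} and uses exactly $p_v m + p_e(m-1) = k$ pickups, so an optimal schedule meeting the bound exists. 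Conversely, if $G_{\mathit{in}}$ has no Hamiltonian path, then every optimal schedule solves all $m$ vertex gadgets and performs at least $m$ edge-gadget traversals, each with its own bridge construction and deconstruction, hence uses at least $p_v m + p_e m > k$ pickups, so no optimal schedule meets the bound. The drop-off variant is symmetric, and both directions together reduce \textsc{Hamiltonian Path} in grid graphs to the stated decision problem.

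The delicate step --- and the main obstacle --- is to justify that the pickup count of an \emph{optimal} schedule is rigidly $p_v m + p_e\cdot(\text{number of edge-gadget traversals})$: each vertex gadget solved costs at least $p_v$ pickups because its local task cannot be completed with fewer, and each edge-gadget traversal costs at least $p_e$ pickups \emph{per traversal} because the construction does not permit a bridge to be maintained and reused across successive traversals of the same edge gadget more cheaply than rebuilding it; meanwhile any schedule using \emph{more} than these minima at some gadget or traversal incurs additional carry distance and therefore exceeds $\OPT$. I would also reconfirm, as in \cref{lem:hardness-lambda-not-zero}, that an optimal schedule never resorts to the cheaper-to-build but longer blue path or to bridges of length greater than one. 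Once the pickup count is pinned down this way, the computation above is routine bookkeeping and requires no further estimates.
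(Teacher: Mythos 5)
Your proposal is correct and follows essentially the same route as the paper, which derives the corollary directly from the reduction of \cref{lem:hardness-lambda-not-zero} by observing that each edge-gadget traversal in an optimal schedule forces a fixed number of pickups and drop-offs, so the optimal pickup count drops to the threshold exactly when a Hamiltonian path exists. Your explicit bookkeeping with the constants $p_v, p_e, q_v, q_e$ and your flagged ``delicate step'' simply spell out what the paper asserts as immediate.
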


The reduction for (2) is significantly more involved: As $\lambda = 0$, the robot is effectively allowed to ``teleport'' across the configuration, albeit only without cargo.

\begin{lemma}
	\label{lem:hardness-lambda-zero}
	\probName is \NP-hard for $\lambda = 0$.
\end{lemma}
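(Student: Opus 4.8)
I would aim to reduce, once more, from \textsc{Hamiltonian Path} in induced subgraphs of the grid~\cite{ItaiPS82} (between two prescribed terminals, which is standard), but through a considerably more intricate construction that neutralizes the robot's free empty movement. The guiding observation is that for $\lambda=0$ the cost of a schedule is exactly its carry distance $\cDist{S}$, and since the number of pickups and drop-offs it contains is essentially pinned down by the symmetric difference of $C_s$ and $C_t$, all the freedom---hence all the hardness---lies in the number of moves the robot makes while carrying a tile. The reduction must therefore force the optimum of this quantity (up to the fixed additive terms) to meet a threshold $B$ exactly when the input grid graph $G_{\mathit{in}}$ admits a Hamiltonian path between its prescribed terminals $s$ and $t$.

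Concretely, I would realize $G_{\mathit{in}}$ by a vertex gadget per vertex and an edge gadget (``corridor'') per edge, embedded at a scale $L$ chosen polynomially large in $m\coloneqq|V(G_{\mathit{in}})|$, with $C_s$ and $C_t$ agreeing outside the vertex gadgets. Inside each vertex gadget, $C_s$ and $C_t$ differ by a small local reconfiguration that can only be carried out cheaply by feeding a tile into the gadget through its corridors; together with a distinguished \emph{token} tile, which starts and ends (in $C_s$ and $C_t$, respectively) in garages attached to the gadgets of $s$ and $t$, the gadgets are arranged so that the token must be carried into---and back out of---every vertex gadget. Since tiles move only by being carried, the token's trajectory is a walk in $G_{\mathit{in}}$ from $s$ to $t$ covering all vertices; routed along corridors it entails at least $(m-1)L$ carrying moves, with equality precisely when $G_{\mathit{in}}$ has an $s$--$t$ Hamiltonian path. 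Adding the (tour-independent) costs of the $m$ local tasks and of the token's garage detours fixes $B$. The direction ``Hamiltonian path $\Rightarrow$ a schedule of weighted makespan $B$'' is then routine: the robot teleports freely whenever empty and only ever carries the token, along the path, with the obvious connectivity checks.

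The main obstacle is the converse: showing that no schedule beats $B$. In contrast to the $\lambda\in(0,1]$ case---where any detour or bridge is expensive because \emph{every} move is weighted---here empty travel is free and a pickup or drop-off costs only~$1$, which gives the robot drastically more options; in particular it might (i)~complete a vertex gadget's task with a tile borrowed from its vicinity rather than the token, (ii)~build a temporary bridge to shortcut between vertex gadgets that are non-adjacent in $G_{\mathit{in}}$, or (iii)~dismantle and later rebuild large parts of the structure to create shortcuts. The quantitative lever is that every tile other than the token sits at the same cell in $C_s$ and $C_t$, so any borrowed or relocated tile must make a round trip; one picks $L=\mathrm{poly}(m)$ large enough that the unavoidable cost of any such round trip---and of any bridge long enough to bypass a corridor, which already costs $\Omega(L^2)$ merely to erect and dismantle, since each of its tiles must be carried out along the growing bridge---strictly exceeds the saving it could ever yield over the intended schedule, whose total advantage is only $\BigO(mL)$. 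Turning this into a proof---first designing the vertex gadget so that solving it with local material is \emph{provably} dearer than ferrying in the token, and then a case analysis establishing that every alternative strategy (including mixtures of token deliveries, local solutions, and bridge building) is dominated by the Hamiltonian schedule---is the technical heart of the argument, and is where essentially all of the difficulty resides.
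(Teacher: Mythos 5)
Your high-level instinct is the right one for $\lambda=0$ --- since only $\cDist{S}$ counts, the hardness must be extracted from the trajectories of the \emph{cargo}, not of the robot --- but the proposal has a genuine gap: the entire soundness direction rests on a vertex gadget with the property that its local task ``can only be carried out cheaply by feeding a tile in through its corridors,'' and you never construct such a gadget or prove it has this property; you explicitly defer exactly the step where the difficulty lives. This is not a routine omission. Because tiles are indistinguishable, there is no ``token'': any tile that reaches the boundary of a gadget serves equally well, so you must argue that \emph{no} tile is locally available --- i.e., that every nearby tile is a cut vertex and that every scheme for temporarily freeing one (partial dismantling, local bridging, cascading replacements) costs more than importing material from an adjacent gadget. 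Your quantitative anchor for this, ``every tile other than the token sits at the same cell in $C_s$ and $C_t$, so any borrowed or relocated tile must make a round trip,'' is false as stated for unlabeled tiles: a borrowed tile need not return, since a different tile can be routed to its vacated position, and the relevant cost is the weight of the \emph{induced matching} plus whatever connectivity forces on top of it, not a per-tile round-trip sum. Note also that in your construction the minimum-weight perfect matching between $C_s$ and $C_t$ has weight roughly $d(s,t)\cdot L + \BigO(m)$, far below your target threshold of $(m-1)L$; so the lower bound cannot come from any matching argument and must come entirely from connectivity-forced detours, which is precisely the part left unproven. Finally, the premise that the number of pickups and drop-offs is ``essentially pinned down by the symmetric difference'' is only a lower bound; schedules may use arbitrarily many intermediate placements (relays, temporary bridges), and these must be folded into the case analysis rather than assumed away.

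For contrast, the paper avoids the tour-forcing problem altogether: it reduces from \textsc{Planar Monotone 3Sat}. Each variable gadget offers two equal-cost cargo paths (encoding \texttt{true}/\texttt{false}) of length exactly $9(\delta(x_i)+1)$ for a designated tile, and each clause gadget is a comb whose critical tile can only be picked up once it lies on a cycle; that cycle comes for free exactly when the cargo of an incident variable gadget passes along the matching side, and otherwise costs at least $4$ extra carry units. The total budget $29m+9n$ is then met iff $\varphi$ is satisfiable. The crucial structural difference is that the paper's lower bound needs only a \emph{local}, per-gadget connectivity argument (a cycle must exist before a pickup), whereas your reduction requires a \emph{global} argument that the cheapest way to supply $m$ gadgets with transient material is a Hamiltonian walk --- a much stronger claim that would additionally require ruling out all mixtures of local borrowing, relays, and bridge building. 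If you want to pursue the Hamiltonian route, the missing core is a concrete gadget together with a proof that its task costs at least some fixed amount more whenever no external tile is delivered; without that, the reduction does not go through.
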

We reduce from \textsc{Planar Monotone~3Sat}~\cite{dbk-obspp-10}, following ideas by the authors of~\cite{AkitayaDKKPSSUW22} for the sequential sliding square problem.
This variant of the SAT problem asks whether a given Boolean formula $\varphi$ in conjunctive normal form is satisfiable, given the following properties:
First, each clause consists of at most~$3$~literals, all either positive or negative.
Second, the clause-variable incidence graph $G_\varphi$ admits a plane drawing in which variables are placed on the $x$-axis, and all positive (resp., negative) clauses are located in the same half-plane, such that edges do not cross the $x$-axis, see~\cref{fig:hardness-teleportation-graph}.
We~assume, without loss of generality, that each clause contains exactly three literals; otherwise, we extend a shorter clause with a redundant copy of one of the existing literals, e.g., $(x_1\lor x_4)$ becomes $(x_1\lor x_4\lor x_4)$.

\begin{figure}[htb]
	\centering%
	\includegraphics[page=6]{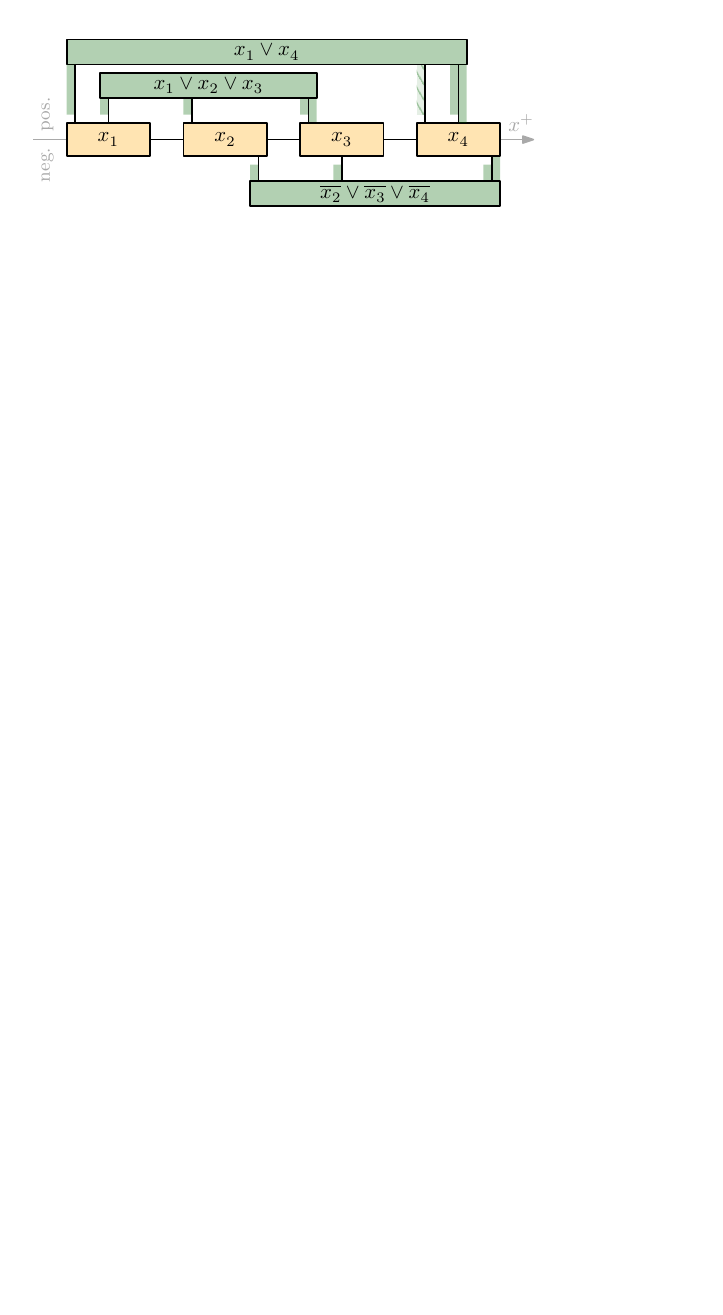}%
	\caption{The embedded clause-variable incidence graph of an instance of \textsc{Planar~Monotone~3Sat}, i.e., the Boolean formula ${\varphi= (x_1\lor x_3\lor x_4)\land (x_1\lor x_4)\land (\overline{x_2}\lor\overline{x_3}\lor\overline{x_4})}$.}%
	\label{fig:hardness-teleportation-graph}%
\end{figure}

Our reduction maps from an instance $\varphi$ of \textsc{Planar Monotone 3Sat} to an instance~$\mathcal{I}_\varphi$ of \probName such that the minimal feasible makespan~$\mathcal{I}_\varphi$ is determined by whether $\varphi$ is satisfiable.
Recall that, due to $\lambda=0$, we only account for carry distance.
Consider an embedding of the clause-variable incidence graph as above, where $C$ and $V$ refer to the $m$ clauses and $n$ variables of $\varphi$, respectively.

We construct $\mathcal{I}_\varphi$ as follows.
A \emph{variable~gadget} is placed on the $x$-axis for each ${x_i\in V}$, and connected along the axis in a straight line.
Intuitively, the variable gadget asks the robot to move a specific tile west along one of two feasible paths, which encode a value assignment.
These paths are highlighted in red and blue in~\cref{fig:hardness-teleportation-variable}, each of length exactly $9(\delta(x_i)+1)$, where $\delta(x_i)$ refers to the degree of $x_i$ in the clause-variable incidence graph.
\begin{figure}[htb]
	\centering%
	\includegraphics[page=2]{hardness-teleportation2x}%
	\caption{The variable gadget; the middle segments are repeated to match the incident clauses.}%
	\label{fig:hardness-teleportation-variable}%
\end{figure}
Both paths require the robot to place its payload into the highlighted gaps to minimize the makespan, stepping over it before picking up again.
We further add one \emph{clause gadget} per clause, connected to its incident variables.
These are effectively combs with three or four prongs, connected to a variable gadget by the rightmost one, see~\cref{fig:hardness-teleportation-clause}.
\begin{figure}[h!tb]%
	\centering%
	\includegraphics[page=3]{hardness-teleportation2x}%
	\caption{A clause gadget and its variable attachment.}%
	\label{fig:hardness-teleportation-clause}%
\end{figure}%
The remaining prongs each terminate at a position diagonally adjacent to the variable gadget of a corresponding literal.
To solve the gadget, the rightmost prong must be temporarily disconnected from the comb, meaning that another prong must have been connected to a variable gadget first.
Such a link can be established without extra cost if one of the adjacent variable gadgets has a matching Boolean value, allowing us to use its payload to temporarily close one of the gaps marked by circles in~\cref{fig:hardness-teleportation-variable}.

We can show that a weighted makespan of $29m+9n$ can be achieved exactly if $\varphi$ is satisfiable; recall that $m$ and $n$ denote the number of clauses and variables in~$\varphi$, respectively.

\begin{proof}[Proof of~\cref{lem:hardness-lambda-zero}]
	Consider an instance $\varphi$ of \textsc{Planar Monotone 3Sat} with $n$ variables and $m$ clauses.
	Recall that each clause can be assumed to contain exactly three literals, i.e, $\sum_{i=1}^{n}\delta(x_i) = 3m$.
	We show that $\varphi$ is satisfiable exactly if there exists a schedule for the constructed instance $\mathcal{I}_\varphi = (C_\varphi, C_\varphi')$ of the reconfiguration problem with weighted makespan
	\begin{equation}
		\label{eq:sat-hardness-makespan}
		2m + \sum_{i=1}^{n}9(\delta(x_i)+1) = 2m + 9(3m+n) = 29m+9n.
	\end{equation}
	\begin{claim}
		\label{clm:hardness-teleportation-positive}
		There exists a schedule of weighted makespan $29m+9n$ for $\mathcal{I}_\varphi$ if $\varphi$ is satisfiable.
	\end{claim}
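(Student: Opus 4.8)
The plan is to turn a satisfying assignment into an explicit schedule and then merely count its carry distance; since $\lambda = 0$ we have $\norm{S} = \cDist{S}$, so empty moves are free and the robot may traverse any walkable part of the (always connected) configuration at no cost. Fix a satisfying assignment $\beta$ of $\varphi$. We build the schedule so that each variable gadget contributes exactly $9(\delta(x_i)+1)$ carry units and each clause gadget contributes exactly $2$, which by \cref{eq:sat-hardness-makespan} sums to $29m+9n$.

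First I would treat the variable gadgets. For each $x_i$, the robot picks up the designated payload tile and carries it west along the red path if $\beta(x_i)$ is true and along the blue path otherwise (matching the polarity convention of \cref{fig:hardness-teleportation-variable}); either path costs exactly $9(\delta(x_i)+1)$ carry units, already accounting for the pickups and drop-offs forced when the robot parks the tile, steps over it, and picks it up again at the highlighted gaps. Parking the tile in such a gap never disconnects the configuration, and --- this is the feature exploited below --- it momentarily makes the payload available at each of the circled gaps of \cref{fig:hardness-teleportation-variable}, one for every clause incident to $x_i$; the tile finally comes to rest at the western end prescribed by $C_\varphi'$.

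Next I would interleave the clause gadgets into this sweep. For a clause $c_j$, pick a literal $\ell \in c_j$ satisfied by $\beta$ --- one exists since $\beta$ satisfies $\varphi$ --- and let $x_i$ be its variable. At the moment the robot, while routing $x_i$'s payload westward, sets that tile down in the circled gap of \cref{fig:hardness-teleportation-variable} facing $c_j$, the comb of $c_j$ (see \cref{fig:hardness-teleportation-clause}) is connected both through its rightmost prong and through this one. The robot then enters the comb and performs the two carry operations bound to $c_j$ (a pickup and an adjacent drop-off), which is feasible precisely because the comb is, at that moment, also held by the supporting prong while the rightmost prong is temporarily detached; it walks back out, if need be stepping over the parked payload, and only then lifts $x_i$'s payload again to continue west. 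Clauses that share a variable are served one after another at the successive circled gaps, and a clause with several satisfied literals is served from an arbitrary one of them, so no extra cost arises. Summing the per-gadget contributions yields a schedule of weighted makespan $29m+9n$.

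The step I expect to be the main obstacle is the joint connectivity-and-scheduling argument that every clause gadget can indeed be solved at cost exactly $2$, with no longer detour or auxiliary bridge ever being forced. This rests on the ordering above: the rightmost prong of $c_j$ may be detached only while the comb is supported through some other prong whose circled gap is currently closed by a \emph{satisfied} variable's payload, and one must check that these temporary closures, the prong detachments, and the continuing westward routing of all payloads never collide --- in particular, that each comb is touched exactly once and each variable payload revisits its circled gaps in the order dictated by the planar embedding, so that the configuration stays connected throughout and the terminal configuration is precisely $C_\varphi'$. I would organize the formal write-up around a single embedding-order sweep of the variable gadgets, hanging the treatment of each clause off the sweep of its chosen satisfied variable.
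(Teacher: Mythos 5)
Your proposal follows essentially the same route as the paper's proof: route each variable's payload along the path selected by the satisfying assignment, park it in the circled gaps to create the cycle that lets each clause gadget be solved with exactly two carry operations via a satisfied literal, and sum $9(\delta(x_i)+1)$ per variable plus $2$ per clause to obtain $29m+9n$. The paper's argument is equally informal about the interleaving/connectivity bookkeeping you flag as the main obstacle, so there is no substantive difference between the two.
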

	\begin{claimproof}
		Consider a satisfying assignment $\alpha(\varphi)$ for $\varphi$ and let~${\alpha(x_i) \in\{{x_i},\overline{x_i}\}}$ represent the literal of $x_i$ with a positive Boolean value, i.e., the literal such that $\alpha(x_i)=\texttt{true}$.
		Using only~$\alpha(\varphi)$ and the paths illustrated in~\cref{fig:hardness-teleportation-variable}, we can construct a schedule for $\mathcal{I}_\varphi$ as follows.

		We follow the path corresponding to~$\alpha(x_i)$, picking up the tile and placing it at the first unoccupied position.
		If this position is adjacent to a clause gadget's prong, this corresponds to the pickup/drop-off sequence $(a)$ in \cref{fig:hardness-teleportation-variable-example}.
		\begin{figure}[htb]
			\centering%
			\includegraphics[page=4]{hardness-teleportation2x}%
			\caption{The variable gadget for $x_3$ as seen in~\cref{fig:hardness-teleportation-graph} and a schedule with makespan ${29=9(\delta(x_3)+1) + 2}$ that solves an adjacent clause gadget.}%
			\label{fig:hardness-teleportation-variable-example}%
		\end{figure}
		With the marked cell occupied, there then exists a cycle containing the tile that is to be moved in the clause gadget, allowing us to solve the clause gadget in just two steps, marked $(b)$ in \cref{fig:hardness-teleportation-variable-example}.
		We can repeat this process, moving the tile to the next unoccupied position on its path until it reaches its destination.

		This pattern takes makespan exactly $9(\delta(x_i)+1)$ per variable gadget; $9$ for each incident clause and $9$ as the base cost of traveling north and south.

		As $\alpha(\varphi)$ is a satisfying assignment for $\varphi$, every clause of $\varphi$ contains at least one literal $v$ such that $\alpha(v)=\texttt{true}$.
		It follows that every clause gadget can be solved in exactly $2$ steps by the above method, resulting in the exact weighted makespan outlined in~\cref{eq:sat-hardness-makespan}, i.e., $29m+9n$ for $n$ variables and $m$ clauses.
	\end{claimproof}
	\begin{claim}
		There does not exist a schedule for $\mathcal{I}_\varphi$ with weighted makespan $29m+9n$ if the Boolean formula $\varphi$ is not satisfiable.
	\end{claim}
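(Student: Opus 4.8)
The plan is to establish a matching lower bound of $29m+9n$ on the weighted makespan --- which, since $\lambda=0$, equals the carry distance $\cDist{S}$ --- and to show that this bound is attainable only when a satisfying assignment exists, so that unsatisfiability of $\varphi$ forces a strictly larger makespan. This is the contrapositive of the claim.

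First I would fix an arbitrary feasible schedule $S$ for $\mathcal{I}_\varphi$ and set up a charging scheme that partitions the carried moves of $S$ (each move, pickup, and drop-off contributing to $\cDist{S}$) among the $n$ variable gadgets and the $m$ clause gadgets, based on the cell in which each operation takes place. The structural facts driving the bound are: (i)~in each variable gadget some designated tile must be transported from its start position to a target position on the far (west) side, and every route realising this displacement inside the gadget has length at least $9(\delta(x_i)+1)$, with exactly two minimum-length routes, namely the red and the blue path of~\cref{fig:hardness-teleportation-variable}; (ii)~solving a clause gadget necessarily requires detaching and later reattaching its rightmost prong, costing at least one pickup and one drop-off, i.e.\ at least $2$ units of carry distance; and (iii)~with suitably ``thin'' gadget boundaries, carrying a tile across a boundary is never profitable, so the contributions charged to distinct gadgets are pairwise disjoint. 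Summing yields $\cDist{S} \ge \sum_{i=1}^{n} 9(\delta(x_i)+1) + 2m = 29m+9n$, matching~\cref{eq:sat-hardness-makespan}.

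Next I would analyse the equality case: if $\cDist{S}=29m+9n$, then every inequality above is tight. In particular, the designated tile of each variable gadget follows exactly one of the two minimum-length routes, which we read off as a truth assignment $\alpha$ (red encoding $x_i=\texttt{true}$, blue encoding $x_i=\texttt{false}$, say), and each clause gadget is reconfigured using exactly $2$ carried moves. The crucial point --- exactly as sketched before the proof --- is that a clause gadget admits a $2$-move solution precisely when, at the moment its rightmost prong is detached, one of the circled gaps of an incident variable gadget is occupied by that variable's tile, and this occupancy is compatible with the chosen minimum-length route only if the corresponding literal is $\texttt{true}$ under $\alpha$. Hence a schedule of makespan $29m+9n$ witnesses an assignment satisfying every clause; if $\varphi$ is unsatisfiable, no such schedule can exist.

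I expect the main obstacle to be making the charging scheme genuinely disjoint and airtight: one must rule out ``cross-gadget'' shortcuts in which the robot borrows tiles or detours from one gadget to cheapen another, verify that the connectivity constraint does not expose a reconfiguration of a clause comb cheaper than the intended detach/reattach, and confirm that no clause gadget can be solved with only $2$ carried moves absent the matching-variable occupancy. Establishing that the gadget interfaces are narrow enough that crossing them with cargo strictly increases cost is the technical heart; once that is in place, the remaining steps are bookkeeping against~\cref{eq:sat-hardness-makespan}.
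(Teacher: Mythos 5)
Your proposal follows essentially the same route as the paper: a per-gadget lower bound of $9(\delta(x_i)+1)$ for each variable gadget and $2$ for each clause gadget, disjointness of these contributions via the gadget spacing, and the observation that the $2$-move clause solution requires a cycle created by a compatible (true) literal's tile, so that tightness of the bound yields a satisfying assignment. The paper phrases the last step slightly more quantitatively --- an unsatisfied clause incurs at least $4$ extra carry units because a cycle containing its designated tile must be built before that tile can be detached --- but the argument is the same, and the technical concerns you flag (confining tiles to their gadgets, ruling out long bridges) are exactly the ones the paper addresses.
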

	\begin{claimproof}
		It is easy to observe that there is no schedule that solves a variable gadget for $x_i$ in makespan less than ${9(\delta(x_i)+1)}$:
		Building a bridge structure horizontally along the variable gadget takes a quadratic number of moves in the variable gadget's width, and shorter bridges fail to shortcut the north-south distance that the robot would otherwise cover while carrying the tile.
		Furthermore, we can confine tiles to their respective variable gadgets by spacing them appropriately.

		Even if $\varphi$ is not satisfiable, each variable gadget can clearly be solved in a makespan of ${9(\delta(x_i)+1)}$.
		However, this no longer guarantees the existence of a cycle in every clause gadget at some point in time, leaving us to solve at least one clause gadget in a more expensive manner.
		In particular, we incur an extra cost of at least $4$ units per unsatisfied clause, as a cycle containing the cyan tile in \cref{fig:hardness-teleportation-clause} must be created in each clause gadget before the tile can be safely picked up.

		Assuming that $m'\in (1,m]$ is the minimum number of clauses that remain unsatisfied in $\varphi$ for any assignment of $x_i$, we obtain a total makespan of
		\begin{equation}
			\label{eq:hardness-teleportation-greedy}
			2m+4m'+\sum_{i=1}^{n}9(\delta(x_i)+1)
			=4m'+29m+9n
			> 29m+9n.\nonumber
		\end{equation}
		We conclude that a weighted makespan of $29m+9n$ is not achievable and must be exceeded by at least four units if the Boolean formula $\varphi$ is not satisfiable.
	\end{claimproof}
	This concludes our proof for $\lambda=0$.
\end{proof}

\cref{thm:weighted-hardness} is then simply the union of \cref{lem:hardness-lambda-not-zero,lem:hardness-lambda-zero}.

    \section{Constant-factor approximation for 2-scaled instances}\label{sec:bounded-approx}

We now turn to the case in which the configurations have \emph{disjoint bounding boxes}, i.e., there exists an axis-parallel bisector that separates the configurations.
Without loss of generality, let this bisector be horizontal such that the target configuration lies south.
We~present a constant-factor approximation algorithm.

For the remainder of this section, we additionally impose the constraint that both the start and target configurations are \emph{$2$\nobreakdash-scaled}, i.e., they consist of $2 \times 2$-squares of tiles
aligned with a corresponding grid.
In \cref{sec:bounded-stretch}, we extend our result to non-scaled~configurations.

\begin{restatable}{theorem}{thmReconfigTwoScaled}
    \label{thm:reconfig-2-scaled}
    For any $\lambda \in [0, 1]$, there exists a constant $c$
    such that for any pair of $2$-scaled configurations $C_s,C_t\in \configs(n)$ with disjoint bounding boxes, we can efficiently compute a schedule for $C_s\rightrightarrows C_t$ with weighted makespan at most $c\cdot \OPT$.
\end{restatable}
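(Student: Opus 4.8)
The plan is to route the entire analysis through one quantity. Form a bipartite graph $\bipGraph{}$ between the $2\times2$-blocks of $C_s$ and those of $C_t$, weight each edge by the $L_1$-distance of its two blocks, let $\minMatching$ be a minimum-weight perfect matching, and set $\lowerbound\coloneqq\weight{\minMatching}$. I would first establish $\OPT=\Omega(\lowerbound)$: any schedule $S$ induces a transportation of the $n$ tiles of $C_s$ onto the $n$ positions of $C_t$ whose total $L_1$-cost is at most $\cDist{S}$, because each carried segment is at least as long as the $L_1$-distance of its endpoints and the triangle inequality charges each tile's net displacement to its carried segments; grouping tiles by their blocks ties this cost to $\weight{\minMatching}$ up to an additive $\BigO(n)$. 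As the disjoint-bounding-box hypothesis forces every block to travel $\Omega(1)$ (hence $\weight{\minMatching}=\Omega(n)$) and every schedule to use $\Omega(n)$ pickups and drop-offs, these additive terms are absorbed and $\OPT\geq\cDist{S}=\Omega(\lowerbound)$.

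It then suffices to compute in polynomial time a schedule $S^{\star}$ with $\cDist{S^{\star}}=\BigO(\lowerbound)$ \emph{and} $\eDist{S^{\star}}=\BigO(\lowerbound)$, since $\lambda\le 1$ then gives $\norm{S^{\star}}=\lambda\cdot\eDist{S^{\star}}+\cDist{S^{\star}}=\BigO(\lowerbound)=\BigO(\OPT)$. The construction has three ingredients. First, as the configuration must stay connected while any tile is in transit, I borrow tiles from the boundary of $C_s$ to erect a one-tile-wide \emph{bridge} across the vertical gap (bent into an L-shape if the bounding boxes are horizontally offset); feasibility bounds its length by $\BigO(\lowerbound/n)$, so building it and, at the end, dismantling it into $C_t$ costs $\BigO((\text{bridge length})^2)=\BigO(\lowerbound)$. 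Second, with the bridge present the robot processes the matched pairs $(B,\minMatching(B))$ one at a time, in an order obtained from a non-crossing decomposition of $\minMatching$ combined with a traversal of the block-adjacency trees of $C_s$ and $C_t$ that keeps the dismantled part and the built part connected: for each pair the robot walks to $B$, relocates the four tiles of $B$ to $\minMatching(B)$ along walkable routes of length $\BigO(L_1(B,\minMatching(B)))$, and continues; summing gives $\cDist{S^{\star}}=\BigO(\weight{\minMatching}+n)=\BigO(\lowerbound)$, and replacing $L_1$ by exact geodesic weights in $\bipGraph{}$ and verifying that the $2$-scaled slack lets every route attain its bound yields the claimed \emph{optimal} carry distance. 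Third, between consecutive pairs the robot walks empty from $\minMatching(B_i)$ to $B_{i+1}$; by the triangle inequality $\sum_i L_1(\minMatching(B_i),B_{i+1})\le\weight{\minMatching}+\sum_i L_1(B_i,B_{i+1})$, and visiting the source blocks in tree-traversal order caps $\sum_i L_1(B_i,B_{i+1})$ by twice the number of block-adjacency edges, i.e.\ $\BigO(n)$, so routing these walks along the bridge gives $\eDist{S^{\star}}=\BigO(\lowerbound)$.

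I expect the main obstacle to be the second ingredient: exhibiting one global ordering of block relocations that simultaneously (i) never disconnects the partially dismantled source, the partially built target, or the bridge; (ii) always leaves the robot a walkable route of length within a constant factor of the $L_1$-distance it must cover, both while carrying and while repositioning; and (iii) exploits the $2$-scaled granularity quantitatively enough to push the carry distance all the way to the optimum, not merely to within a constant. Concretely, this amounts to showing that a minimum-weight (without loss of generality planar, non-crossing) matching decomposes into monotone bundles so that target blocks placed earlier pave the corridors along which later ones are delivered, while the $2\times2$ structure guarantees that eroding a block one tile at a time and threading a carried tile past already-placed material is always connectivity-preserving at no more than a constant-factor detour. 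Granting this, the lower bound, the bridge accounting, and the empty-travel bound are routine; the non-scaled case treated in \cref{sec:bounded-stretch} is then layered on top via a separate rounding argument.
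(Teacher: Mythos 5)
Your lower-bound step (MWPM weight lower-bounds the carry distance, and $n=\BigO(\lowerbound)$ because disjoint bounding boxes force every tile to move) matches the paper's, but your construction diverges from the paper's and contains two genuine gaps. First, the bridge accounting does not work: from $\lowerbound\ge n\cdot g$ (where $g$ is the gap between the bounding boxes) you correctly get a bridge length of $\BigO(\lowerbound/n)$, but erecting a one-tile-wide bridge of length $g$ tile by tile costs $\Theta(g^2)$ in carry distance, and $g^2\le(\lowerbound/n)^2=\lowerbound^2/n^2$ is \emph{not} $\BigO(\lowerbound)$ unless $\lowerbound=\BigO(n^2)$, i.e.\ unless $g=\BigO(n)$. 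For two small configurations separated by a distance $g\gg n$, your bridge alone costs $\Theta(g^2)$ against an optimum of $\Theta(ng)$, so the approximation ratio is unbounded. The paper avoids this by never building a static structure across the gap: it translates the \emph{entire} mass incrementally (two units at a time, via \cref{lem:move-down} applied to free components and \cref{cor:translation}), so every carried step is a step of a tile toward its final destination and the total carry cost telescopes to exactly the matching weight.

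Second, the step you yourself flag as the main obstacle --- a single global ordering of block relocations that keeps the partially dismantled source, the partially built target, and the connecting structure connected while giving constant-factor walkable routes --- is asserted, not proved, and it is precisely the hard part of the theorem. The paper's resolution is structural rather than order-theoretic: it canonicalizes both configurations to opposite-facing \emph{histograms} sharing a base line (Phases (I) and (III), \cref{lem:turn-to-histo}), for which a simple west-to-east, north-to-south greedy order provably induces an MWPM via the crossing-paths exchange argument (\cref{obs:swapping}, \cref{lem:hist-to-hist}) and trivially preserves connectivity through the shared base. The exact additivity $\lowerboundOf{C_s,H_s}+\lowerboundOf{H_s,H_t}+\lowerboundOf{H_t,C_t}=\lowerboundOf{C_s,C_t}$, which holds because all three phases move tiles monotonically toward the bisector, is what makes the three-phase cost sum to $\BigO(\OPT)$. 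Without an analogue of this intermediate canonical form, your non-crossing-matching-plus-tree-traversal ordering would need a substantial independent proof of both connectivity and route lengths, and as written the argument does not close.
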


Our algorithm utilizes \emph{histograms} as intermediate configurations.
A~histogram consists of a \emph{base} strip of unit height (a single tile, also when 2-scaled) and unit width \emph{columns} attached orthogonally to its base.
The direction of its columns determines the orientation of a histogram, e.g., $H_s$ in \cref{fig:hist-to-hist} is \emph{north-facing}.

\begin{figure}[htb]
	\centering%
	\includegraphics[page=1]{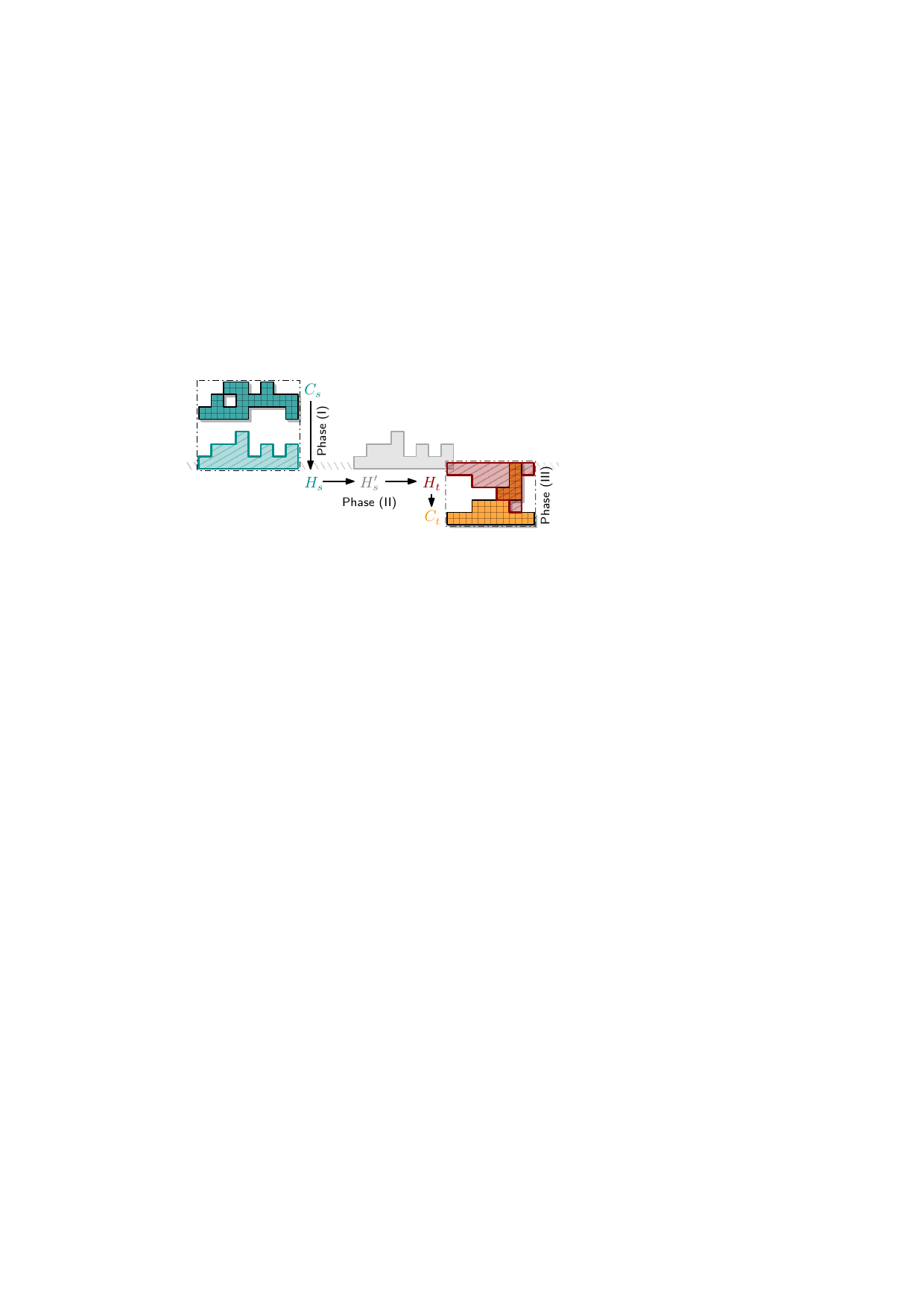}%
	\caption{
		An example for a start and target configuration $C_s, C_t$, the intermediate histograms $H_s, H_t$ with a shared baseline, and the horizontally translated $H_s'$ that shares a tile with $H_t$.
		If $H_s$ and $H_t$ overlap, $H_s' \coloneqq H_s$.
	}
	\label{fig:hist-to-hist}
\end{figure}

As illustrated in~\cref{fig:hist-to-hist}, we proceed in three phases.
\subparagraph*{Phase (I).}%
Iteratively move components of $C_s$ south until it forms a north-facing histogram $H_s$ such that its base shares its $y$-coordinate with a module of $C_t$.
\subparagraph*{Phase (II).}%
Translate $H_s$ to overlap with the target bounding box and transform it into a specific south-facing histogram $H_t$ within the bounding box~of~$C_t$.
\subparagraph*{Phase (III).}%
Finally, apply {\sffamily Phase (I)} in reverse to obtain $C_t$ from $H_t$.
\bigskip

We reduce this to two subroutines:
Transforming any $2$\nobreakdash-scaled configuration into a histogram and reconfiguring any two histograms into one another.
Note that while the respectively obtained histograms are $2$\nobreakdash-scaled when the initial configurations are, the reconfiguration between two opposite-facing histograms actually does not depend on this condition.
In fact, the same reconfiguration routines remain valid even when the histograms involved are not $2$\nobreakdash-scaled, highlighting the broader applicability of our method.

\subsection{Preliminaries for the algorithm}
\label{subsec:preliminaries-algorithm}
For our algorithm, we use the following terms.
Given two configurations $C_s, C_t \in \configs(n)$, the weighted bipartite graph $\bipGraph{C_s, C_t}=({C_s \cup C_t}, {C_s \times C_t}, L_1)$ assigns each edge a weight equal to the $L_1$-distance between its end points.

A \emph{perfect matching} $M$ in $\bipGraph{C_s, C_t}$ is a subset of edges such that every vertex is incident to exactly one of them;
its weight $\weight{M}$ is defined as the sum of its edge weights.
By definition, there exists at least one such matching in $\bipGraph{C_s, C_t}$.
Furthermore, a \emph{minimum weight perfect matching}~(MWPM) is a perfect matching $M$ of minimum weight $\lowerboundOf{C_s,C_t}=\weight{M}$, which is a natural lower bound on the necessary carry distance, i.e., \OPT.

Let $S$ be any schedule for ${C_s\rightrightarrows C_t}$.
Then $S$ induces a perfect matching in~$\bipGraph{C_s, C_t}$, as it moves every tile of~$C_s$ to a distinct position of $C_t$.
We say that~$S$ has \emph{optimal carry distance} exactly if $\cDist{S}=\lowerboundOf{C_s,C_t}$.

We further make the following, simple observation about \emph{crossing paths}, i.e., paths that share a vertex in the integer grid.
This observation will help us to make quick statements concerning MWPMs both for proving \cref{lem:turn-to-histo,lem:hist-to-hist}.
We denote by $L_1(s,t)$ the length of a shortest path between the vertices $s$ and $t$.
\begin{observation}
    \label{obs:swapping}
    If any shortest paths from $s_1$ to $t_1$ and $s_2$ to $t_2$ (all in $\mathbb{Z}^2$) cross,
    then {$L_1(s_1,t_2) + L_1(s_2,t_1) \leq L_1(s_1,t_1) + L_1(s_2,t_2)$} holds.
\end{observation}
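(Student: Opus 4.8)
The plan is to exploit the fact that the $L_1$-metric on $\mathbb{Z}^2$ is additive along shortest paths together with the ordinary triangle inequality. First I would fix a crossing vertex: by hypothesis there exist a shortest $s_1$--$t_1$ path $P_1$ and a shortest $s_2$--$t_2$ path $P_2$ that share some grid vertex $p$. Since $p$ lies on the shortest path $P_1$, the portions of $P_1$ from $s_1$ to $p$ and from $p$ to $t_1$ are themselves shortest, so $L_1(s_1,t_1) = L_1(s_1,p) + L_1(p,t_1)$; symmetrically $L_1(s_2,t_2) = L_1(s_2,p) + L_1(p,t_2)$.

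Next I would apply the triangle inequality in the $L_1$-metric to the ``swapped'' pairs, routing each through $p$: $L_1(s_1,t_2) \le L_1(s_1,p) + L_1(p,t_2)$ and $L_1(s_2,t_1) \le L_1(s_2,p) + L_1(p,t_1)$. Adding these two inequalities and substituting the two identities from the previous step yields
\[
L_1(s_1,t_2) + L_1(s_2,t_1) \le \bigl(L_1(s_1,p)+L_1(p,t_1)\bigr) + \bigl(L_1(s_2,p)+L_1(p,t_2)\bigr) = L_1(s_1,t_1) + L_1(s_2,t_2),
\]
which is exactly the claimed inequality.

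There is essentially no obstacle here; the only point worth stating carefully is what ``cross'' means — namely that the two paths share a grid vertex (as opposed to merely intersecting geometrically between lattice points, which on the integer grid with axis-parallel unit steps is the same thing) — so that $p$ is a legitimate point at which the additivity of $L_1$ along $P_1$ and $P_2$ can be invoked. I would include one sentence fixing this interpretation and then present the three-line computation above.
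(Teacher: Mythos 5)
Your proof is correct and matches the paper's argument exactly: the paper also justifies this observation by splitting both paths at the crossing vertex and applying the triangle inequality to the swapped pairs. Your write-up simply makes the one-line argument explicit.
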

This follows from splitting both paths at the crossing vertex and applying triangle~inequality.

\subsection{Phase (I): Transforming into a histogram}
\label{subsec:building-a-histogram}

We proceed by constructing a histogram from an arbitrary $2$-scaled configuration by moving tiles strictly in one cardinal direction.

\begin{lemma}
    \label{lem:turn-to-histo}
    Let $C_s \in \configs(n)$ be a $2$-scaled polyomino and let $H_s$ be a histogram that can be created from $C_s$ by moving tiles in only one cardinal direction.
    We can efficiently compute a schedule with optimal carry distance and total makespan ${\BigO(n + \lowerboundOf{C_s,H_s})}$ for $C_s \rightrightarrows H_s$.
\end{lemma}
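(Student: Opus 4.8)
Assume w.l.o.g.\ that the single permitted direction is \emph{south}, so that $H_s$ is north-facing with base at some level $y_0$. Since $C_s$ is $2$-scaled, work with its polyomino of $2\times 2$ blocks; a southward schedule never changes a tile's column, so $C_s$ and $H_s$ have the same number of blocks in every column, and $H_s$ is precisely the columnwise \emph{compaction} of $C_s$ onto the base at $y_0$. This already pins down the carry distance: a column-respecting perfect matching of $\bipGraph{C_s,H_s}$ has weight equal to the total southward displacement $\Delta \coloneqq \sum_{\text{tiles }t}\bigl(y(t)-y(\text{image of }t)\bigr)$, and a short exchange argument (via \cref{obs:swapping}) shows that no matching does better, so $\lowerboundOf{C_s,H_s}=\Delta$. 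It therefore suffices to output a schedule that carries every tile straight down its column to its matched target and never moves a tile north: the carried moves then sum to exactly $\Delta$, so the carry distance is optimal.

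\textbf{The algorithm.}
I would build $H_s$ incrementally while maintaining connectivity. Keep a growing partial target $H'$ (initially empty) and the shrinking remainder $R\subseteq C_s$, with $H'\cup R$ connected throughout. The core structural step is to convert the hypothesis ``$H_s$ is creatable from $C_s$ by southward moves'' into a convenient move order: an ordering $b_1,\dots,b_N$ of the moving blocks such that carrying each $b_i$ in turn \emph{directly} to its final position is connectivity-preserving (so no block moves twice or moves north) and such that each dropped block immediately reattaches to $H'$. Intuitively, this amounts to repeatedly peeling a block ``from above'' in $R$ and growing $H'$ outward from its anchor; $2$-scaledness gives the robot enough width to reach every pickup and drop-off by walking on the current configuration. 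The robot then iterates: walk to $b_i$, pick it up, carry it straight down its column, drop it.

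\textbf{Makespan.}
The carried moves total exactly $\Delta=\lowerboundOf{C_s,H_s}$, so the carry distance is optimal. For the empty distance, I would further arrange the move order, among the valid ones, so that it is spatially coherent --- filling $H_s$ column by column (say left to right), each column from the base upward. Then, after dropping a block at height $y$ in a column, the robot climbs back (through $R$, routing around vertical gaps) to fetch the next source block of the same column at height $y'\ge y+1$, a detour whose length is proportional to the carry it is about to perform; summed over a column this is $\BigO(1)$ times that column's contribution to $\Delta$. Moving from one column to the next costs $\BigO(1)$ horizontally, plus the height of the just-finished column, plus the drop height of the next column's first source block; over all columns these sum to $\BigO(n)$ (sum of column heights) and $\BigO(\Delta)$ (sum of drop heights). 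Hence $\eDist{S}=\BigO(n+\lowerboundOf{C_s,H_s})$, and since $\lambda\le 1$ the weighted --- indeed the total --- makespan is $\BigO(n+\lowerboundOf{C_s,H_s})$. All steps are polynomial-time.

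\textbf{Main obstacle.}
The hard part is the combined structural-and-routing claim: that there is a single processing order under which (a) every source block can be removed from $R$ without disconnecting it, (b) every drop-off reattaches the configuration, and (c) the robot's total empty travel stays linear in $n$ plus the carry distance --- and that this order respects the column-by-column, bottom-up pattern used in the makespan analysis. This is exactly where both the southward-creatability hypothesis and $2$-scaledness get used. What remains is a careful but idea-free treatment of corner cases: columns of $C_s$ that are already compact, overhangs in $R$ that come to rest on $H'$, large vertical gaps between $C_s$ and the base level, and the hand-off between finishing one column and starting the next.
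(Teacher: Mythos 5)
Your high-level framing is right (reduce to the columnwise compaction, note that any south-only schedule realizes the column-preserving matching, which is an MWPM by the crossing argument of \cref{obs:swapping}), but the algorithmic core of your plan contains a genuine gap: the ``move order'' you postulate --- one in which every block is picked up exactly once and carried \emph{directly} down its own column to its final position, with connectivity preserved at both ends --- does not exist in general. Consider a $2$-scaled configuration consisting of a full column $0$ reaching down to the base level, together with a horizontal bar of blocks in columns $1,\dots,k$ attached to the top of column $0$, far above the base; the target histogram places one block of each column $1,\dots,k$ at the base level. Removing any bar block other than the easternmost one disconnects the blocks to its east, so the easternmost block must move first; but its target cell at the base of column $k$ is adjacent to nothing (columns $1,\dots,k$ are empty below the bar), so it cannot be dropped there, nor anywhere else in its column. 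No processing order fixes this: the obstruction is structural, not a ``corner case'' of overhangs, and it is exactly why the paper does \emph{not} carry tiles to their final positions in one shot.

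The paper's proof instead translates each entire \emph{free component} (a maximal connected piece not yet in final position) south by two units at a time (\cref{lem:move-down}), so an overhanging bar descends as a unit while remaining attached to the already-settled part, and repeats this until the histogram is formed. Each tile is then picked up and dropped $\Theta(\text{displacement})$ times, but since each pickup--drop pair advances the tile exactly two units south, the carry distance (robot moves while loaded plus the count of pickups and drop-offs) still equals the tile's total southward displacement, hence remains optimal; the empty distance is controlled by charging the walk updates to $\sum_i\norm{F_i}\in\Theta(\lowerboundOf{C_s,H_s})$. If you want to salvage a single-carry-per-tile scheme you would need to allow intermediate parking positions or horizontal detours, at which point both the connectivity argument and the carry-distance optimality claim have to be redone; as written, your Makespan paragraph analyzes a schedule that cannot always be executed.
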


To achieve this, we iteratively move sets of tiles in the target direction by two units, until the histogram is constructed.
We give a high-level explanation of our approach by example of a north-facing histogram, as depicted in~\cref{fig:move-down}.

\begin{figure*}[htb]
    \subcaptionsetup{justification=centering}%
    \begin{subfigure}{\columnwidth/4}%
        \centering%
        \includegraphics[page=1]{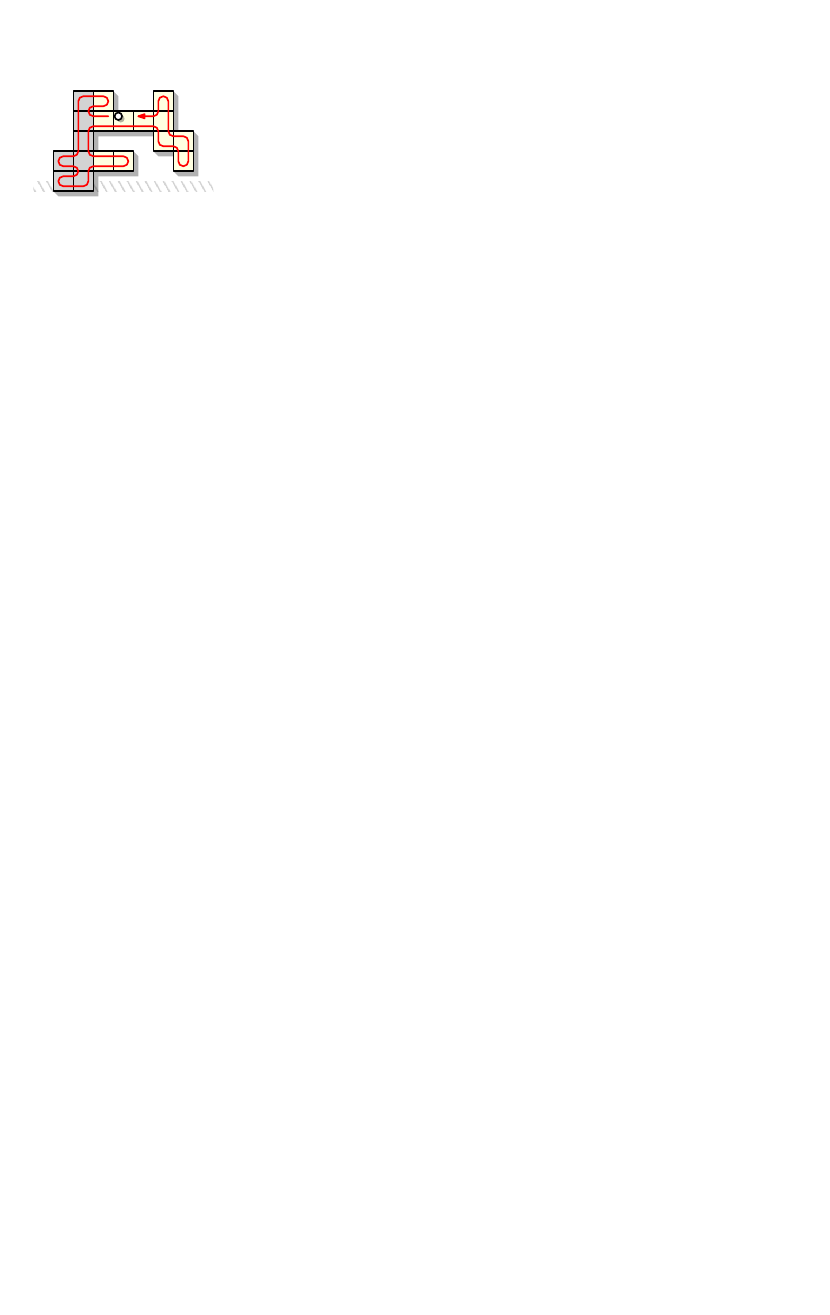}%
        \caption{}%
    \end{subfigure}%
    \begin{subfigure}{\columnwidth*3/4}%
        \centering%
        \includegraphics[page=2]{create-histogram-shortened}%
        \caption{}%
    \end{subfigure}%
    \caption{%
        Visualizing \cref{lem:turn-to-histo}.
        A walk across all tiles \textbf{\textsf{(a)}}, the set $H$ (gray), and two free components~(yellow).
        In \textbf{\textsf{(b)}}, based on the walk, free components are iteratively moved south to reach a histogram shape.
        The free component that is going to be translated next is highlighted~in~green.
    }
    \label{fig:move-down}
\end{figure*}
Let $P$ be any intermediate $2$-scaled polyomino obtained by moving tiles south  while realizing ${C_s \rightrightarrows H_s}$.
Let $H$ be the set of maximal vertical strips of tiles that contain a base tile in $H_s$, i.e., all tiles that do not need to be moved further~south.
We define the \emph{free components} of $P$ as the set of connected components in $P \setminus H$.
By definition, these exist exactly if $P$ is not equivalent to~$H_s$, and once a tile in $P$ becomes part of $H$, it is not moved again until $H_s$ is obtained.

We compute a \emph{walk} that covers the polyomino, i.e., a path that is allowed to traverse edges multiple times and visits each vertex at least once, by depth-first traversal on an arbitrary spanning tree of $P$.
The robot then continuously follows this walk, iteratively moving free components south and updating its path accordingly:
Whenever it enters a free component~$F$ of $P$, it performs a subroutine with makespan $\BigO(\norm{F})$ that moves $F$ south by two units.
Adjusting the walk afterward may increase its length by $\BigO(\norm{F})$ units per \mbox{free component}.

Upon completion of this algorithm, we can bound both the total time spent performing the subroutine and the additional cost incurred by extending the walk by $\BigO(\lowerboundOf{C_s,H_s})$.
Taking into account the initial length of the walk, the resulting makespan is $\BigO(n+\lowerboundOf{C_s,H_s})$.
The subroutine for the translation of free components can be stated as follows.

\begin{lemma}
    \label{lem:move-down}
    We can efficiently compute a schedule of makespan $\BigO(\norm{F})$ to translate any free component~$F$ of a $2$-scaled polyomino in the target direction by two units.
\end{lemma}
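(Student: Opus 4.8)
The plan is to reduce the translation to a short, local operation performed once per maximal vertical strip of $F$, ordered so that connectivity is preserved at every step. Since $F$ is $2$\nobreakdash-scaled, it is a polyomino $\hat F$ of $k \coloneqq \norm{F}/4$ axis-aligned $2\times 2$ \emph{blocks}, and translating $F$ two units in the target direction — say south — is exactly shifting $\hat F$ one block-row south. Decompose $\hat F$ into its maximal vertical \emph{runs} of blocks (a single block-column may contain several runs separated by gaps). The first observation is that shifting one run $R$ of height $m$ blocks down by one block-row is the same as relocating only the four tiles of $R$'s topmost block to the two rows directly below $R$: removing the top block leaves the shorter run connected, and the four tiles can be re-appended at the bottom one at a time, in an order that first places a relocated tile adjacent to the part of $R$ that stays put and then cascades downward, so that $R$ stays connected throughout. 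This uses $\BigO(1)$ pickups and drop-offs and carry distance $\BigO(m)$ (each of the four tiles is carried the length of $R$), and it moves no tile of $R$ below its top block.

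Next I would fix the order in which runs are processed. Form the run-adjacency graph (runs adjacent iff they contain horizontally neighbouring blocks), take a spanning tree $T$, root it, and process runs leaves-first, so that the set of not-yet-shifted runs always induces a connected subtree. Note that the per-run operation above can never create a collision — a run's target cells in its own column are always free, since a block there would have merged the run with a neighbour — so the only thing at stake is connectivity to \emph{other} runs. Shifting $R$ can break only adjacencies incident to the cells $R$ vacates, namely those at $R$'s top block and those severed by its one-block-row descent; each time such a break would actually disconnect the configuration, I would, just before it happens, place a constant-size temporary \emph{bridge} of at most two helper tiles across the one-block-row gap, borrowed from a neighbouring block (which remains connected, being only partially emptied) and removed again once the neighbour has been shifted and the proper adjacency is restored. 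There are $\BigO(k)$ runs and hence $\BigO(k)$ bridge events, each of cost $\BigO(1)$; the $2$\nobreakdash-scaling is exactly what provides the slack to reroute a severed link one row down.

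For the running time, the robot follows a depth-first walk of the block-graph of $\hat F$ (of length $\BigO(k)$) and performs the per-run operation the last time it leaves each run. Every one of the $4k$ tiles is picked up, carried, and dropped off $\BigO(1)$ times, with carry distance at most the height of its run, so the total carry distance is $\BigO(\sum_R m_R) = \BigO(k)$; re-routing the walk after each shift adds $\BigO(1)$ length per run, so the empty distance is $\BigO(k)$; the bridges add a further $\BigO(k)$. Altogether the makespan is $\BigO(k) = \BigO(\norm{F})$. (That the two rows directly south of $F$ are free — so the re-appended bottom blocks never collide with $H$ or another component — is guaranteed by the order in which \cref{lem:turn-to-histo} processes free components.)

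The step I expect to be the real obstacle is the connectivity bookkeeping: proving that the leaves-first processing order together with the bridging rule keeps the \emph{entire} configuration connected after every single tile operation, handling runs that are mutually adjacent at several block-rows, the interaction between already-shifted descendant runs and not-yet-shifted ancestor runs, and the precise ordering of the four tile moves inside each hollow-and-reappend step. This is the part that genuinely relies on the $2$\nobreakdash-scaled hypothesis.
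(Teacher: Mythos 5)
Your proposal diverges from the paper's proof in a way that creates, rather than avoids, the central difficulty, and the part you yourself flag as ``the real obstacle'' is in fact a genuine gap. By shifting each run a full block-row (two units) in one go, you destroy any adjacency between two runs that touch along exactly one block-row: a $2\times 2$ contact offset by two units leaves zero overlap. Your fix is a temporary two-tile bridge ``borrowed from a neighbouring block,'' but this is not worked out and is problematic on several counts: the robot can carry only one tile at a time, so installing and dismantling a two-tile bridge is a sequence of single-tile moves each of whose intermediate configurations must itself be connected; borrowing tiles from a neighbouring block punches a new hole whose effect on connectivity is not analysed; and deciding when a severed adjacency ``would actually disconnect the configuration'' is a global condition that your leaves-first order does not localize. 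The same issue infects the hollow-and-reappend step itself: removing the four tiles of a top block one at a time preserves the run's internal connectivity by your cascading order, but a western or eastern neighbour attached \emph{only} at that top block can be cut off after the first one or two removals. In effect you have imported the difficulty of the non-$2$-scaled case (which the paper handles separately, in \cref{lem:one-scaled-move-down}, with auxiliary tiles and a blocking relation) into the $2$-scaled case and then not resolved it.

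The paper's proof of \cref{lem:move-down} sidesteps all of this with two observations you did not use. First, it translates by \emph{one} unit at a time and simply does the whole pass twice; since any two horizontally adjacent unit-width strips of a $2$-scaled polyomino share at least two units of vertical contact, a relative offset of one unit between a shifted and an unshifted strip still leaves one unit of contact, so no inter-strip adjacency is ever severed and no bridges are needed. Second, the per-strip operation moves only the single northernmost tile of each maximal vertical strip to the first free position at its southern end; $2$-scaledness guarantees no single tile is a cut vertex, so this pickup is always safe, and the drop-off is adjacent to the strip's bottom. The robot performs this once per strip along a DFS walk, giving carry distance $\BigO(\norm{F})$ and walk length $\BigO(\norm{F})$. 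Your run/block decomposition and spanning-tree ordering are not wrong in spirit, but without replacing the two-unit shift by two one-unit passes (or fully proving the bridging scheme), the connectivity argument does not close.
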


\begin{proof}
    Without loss of generality, let the target direction be south.
    We show how to translate~$F$ south by one unit without losing connectivity, which we do twice.

    We follow a bounded-length walk across $F$
    that exclusively visits tiles with a tile neighbor in northern direction.
    Such a walk can be computed by depth-first traversal of $F$.
    Whenever the robot enters a maximal vertical strip of $F$ for the first time, it picks up the northernmost tile, places it at the first unoccupied position to its south, and continues its traversal.

    As each strip is handled exactly once, the total movement cost on vertical strips for carrying tiles and returning to the pre-pickup position is bounded by $\BigO(\norm{F})$.
    This bound also holds for the length of the walk.
\end{proof}

By applying \cref{lem:move-down} on the whole polyomino $P$ instead of just a free component, we can translate $P$
in any direction with asymptotically optimal makespan.

\begin{corollary}
    \label{cor:translation}
    Any $2$-scaled polyomino can be translated by $k$ units in any cardinal direction by a schedule of weighted makespan $\BigO(n\cdot k)$.
\end{corollary}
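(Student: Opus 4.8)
The plan is to iterate the single shift of \cref{lem:move-down}. First I would observe that, although \cref{lem:move-down} is phrased for a free component $F$, its proof uses only that $F$ is a $2$-scaled polyomino: the depth-first walk visiting exactly the tiles that have a northern neighbor exists for any $2$-scaled polyomino, and the argument that connectivity survives shifting each maximal vertical strip down by one relies solely on the $2\times 2$ block structure. Hence the subroutine applies verbatim to $P$ itself, translating the entire configuration by two units in a fixed cardinal direction (by symmetry, say south) with makespan $\BigO(\norm{P}) = \BigO(n)$, and — crucially — the resulting configuration is again $2$-scaled, so the subroutine can be reapplied.

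Second, I would simply telescope: applying the subroutine $\lfloor k/2\rfloor$ times yields a schedule translating $P$ by $2\lfloor k/2\rfloor$ units while keeping it $2$-scaled after each application; if $k$ is odd, I append one more invocation of the single-unit shift that forms the inner loop in the proof of \cref{lem:move-down} (there it is run twice to realize a shift of two). This last shift also has makespan $\BigO(n)$ and preserves connectivity; the only difference is that the final configuration is offset by one unit from the original $2$-scaling grid, which the statement does not forbid. Concatenating these $\BigO(k)$ sub-schedules, each of makespan $\BigO(n)$, gives one valid schedule for the translation, and since $\norm{S} = \lambda\cdot\eDist{S} + \cDist{S}$ is at most the (unweighted) makespan for $\lambda\in[0,1]$, its weighted makespan is $\BigO(n\cdot k)$, as claimed.

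Third, I would check feasibility throughout, not merely between the $\BigO(k)$ macro-steps: each individual shift is connectivity-preserving by \cref{lem:move-down} (and its one-unit inner step), and between two consecutive shifts the configuration is connected, so the robot can walk from the tile where it finished to the starting tile of the next walk along the configuration at an extra cost of $\BigO(n)$ empty moves per shift, which does not affect the asymptotics. The main obstacle here is bookkeeping rather than a conceptual hurdle: one must be careful that the subroutine of \cref{lem:move-down} does not secretly rely on the ``free component'' framing, and one must order the steps so that the (at most one) single-unit shift comes last, ensuring the subroutine is never invoked on a configuration that has lost its $2$-scaled structure. Everything else is a direct sum of $\BigO(k)$ translations of cost $\BigO(n)$ each.
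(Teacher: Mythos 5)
Your proposal is correct and matches the paper's (one-line) argument: apply the subroutine of \cref{lem:move-down} to the entire polyomino rather than a free component, and iterate the two-unit shift to accumulate a translation of $k$ units at cost $\BigO(n)$ per shift. Your additional bookkeeping (the odd-$k$ case via one single-unit shift, the $\BigO(n)$ empty walk between shifts, and bounding the weighted makespan by the unweighted one) only makes explicit details the paper leaves implicit.
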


With this, we are equipped to prove the main result for the first phase.

\begin{proof}[Proof of~\cref{lem:turn-to-histo}]
    As in the proof of \cref{lem:move-down}, assume that the target direction is~south.

    We start by identifying a walk $W$ on $C_s$ that passes every edge at most twice, by depth-first traversal of a spanning tree of $C_s$.
    To construct our schedule, we now simply move the robot along~$W$.
    On this walk, we denote the $i$th encountered free component with $F_i$ as follows.
    Whenever the robot enters a position~$t$ that belongs to a free component~$F_i$, we move~$F_i$ south by two positions (\cref{lem:move-down}) and modify both~$t$ and~$W$ in accordance with the south movement that was just performed.
    The robot stops on the new position for~$t$ and we denote the next free component by~$F_{i+1}$.
    We keep applying \cref{lem:move-down} until~$t$ does not belong to a free component, from where on the robot continues on the modified walk~$W$.
    We illustrate this approach in \cref{fig:move-down}.

    While it is clear that repeatedly moving all free components south eventually creates~$H_s$, it remains to show that this happens within $\BigO(n + \lowerboundOf{C_s,H_s})$ moves.
    To this end, we set $\sumcomps{} = \sum \norm{F_i}$.
    We first show that the schedule has a makespan in $\BigO(n + \sumcomps{})$ and then argue that $\sumcomps{} \in \Theta(\lowerboundOf{C_s, H_s})$.
    Clearly, all applications of \cref{lem:move-down} result in a makespan of~$\BigO(\sumcomps{})$, and since $\norm{W} \in \BigO(n)$ we can also account for the number of steps on the initial walk~$W$.
    However, additional steps may be required:
    Whenever only \emph{parts} of $W$ are moved south by two positions, the distance between the involved consecutive positions in $W$ increases by two, see \cref{fig:walk-enlargement}.
    We account for these additional movements as follows.

    \begin{figure}[htb]
        \centering%
        \includegraphics[page=2]{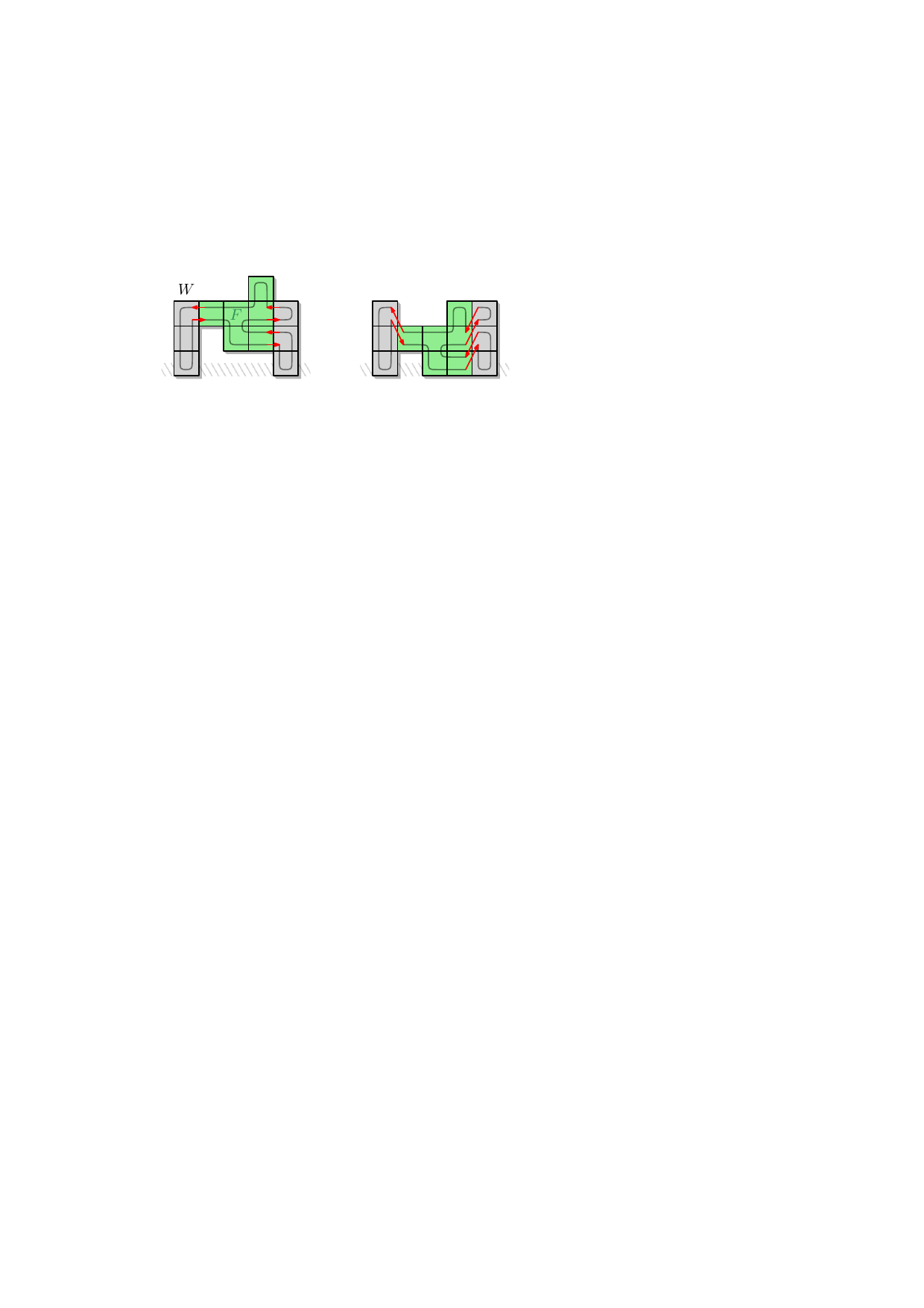}%
        \caption{%
            When all tiles in a free component~$F$ are moved south, the walk~$W$ on the polyomino may become longer.
        }
        \label{fig:walk-enlargement}
    \end{figure}

    For any free component~$F_i$, there are at most $2\norm{F_i}$ edges between $F_i$ and the rest of the configuration.
    Each transfer increases the movement cost by two and each edge is traversed at most twice, increasing the length of $W$ by at most $8\norm{F_i}$ movements for every $F_i$.
    Consequently, $\BigO(n + \sumcomps{})$ also accounts for these movements.

    It remains to argue that $\sumcomps{} \in \Theta(\lowerboundOf{C_s, H_s})$.
    Clearly, as the schedule for moving $F_i$ two units south induces a matching with a weight in $\Theta(\norm{F_i})$, combining all these schedules induces a matching with a weight in $\Theta(\sumcomps{})$.
    We next notice that all schedules for $C_s\rightrightarrows H_s$ that move tiles exclusively south have the same weight because the sum of edge weights in the induced matchings is merely the difference between the sums over all $y$\nobreakdash-positions in~$C_s$ and~$H_s$, respectively.
    Finally, all these matchings have minimal weight, i.e., $\lowerboundOf{C_s, H_s}$:
    If~there exists an MWPM between $C_s$ and $H_s$ with crossing paths, these crossings can be removed by \cref{obs:swapping} without increasing the matching's weight.

    By that, $\sumcomps{} \in \Theta(\lowerboundOf{C_s, H_s})$ holds, yielding a bound of $\BigO(n + \lowerboundOf{C_s,H_s})$.
    Additionally, since tiles are moved exclusively toward the target direction in \cref{lem:move-down}, the schedule has optimal carry distance.
\end{proof}
\subsection{Phase (II): Reconfiguring histograms}
\label{subsec:histogram-to-histogram}
\label{sec:hist-to-hist}

It remains to show how to transform one histogram into the other.
By the assumption of the existence of a horizontal bisector between the bounding boxes of~$C_s$ and~$C_t$, the histogram $H_s$ is north-facing, whereas $H_t$ is south-facing.
The bounding box of $C_s$ is vertically extended to share exactly one $y$-coordinate with the bounding box of $C_t$, and this is where both histogram bases are placed; see~\cref{fig:hist-to-hist} for an illustration.
Note that the histograms may not yet overlap.
However, by \cref{cor:translation}, the tiles in $H_s$ can be moved toward $H_t$ with asymptotically optimal makespan until both histogram bases share a tile.

\begin{lemma}
    \label{lem:hist-to-hist}
    Let $H_s$ and $H_t$ be opposite-facing histograms that share a base tile.
    We can efficiently compute a schedule of makespan $\BigO(n + \lowerboundOf{H_s,H_t})$ for ${H_s \rightrightarrows H_t}$ with optimal carry~distance.
\end{lemma}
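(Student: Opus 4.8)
The plan is to reduce $H_s \rightrightarrows H_t$ to a one-dimensional transportation problem along the shared base. I would place coordinates so that the common base lies on the $x$-axis; since $H_s$ is north-facing and $H_t$ is south-facing, every tile of $H_s$ has non-negative $y$-coordinate and every tile of $H_t$ has non-positive $y$-coordinate. Hence, in any perfect matching $M$ between $H_s$ and $H_t$, each matched pair $(p,q)$ satisfies $y_p \geq 0 \geq y_q$, so $|y_p - y_q| = y_p - y_q$ and the total vertical cost $\sum_{(p,q) \in M} |y_p - y_q| = \sum_{p \in H_s} y_p - \sum_{q \in H_t} y_q$ is the same for every $M$. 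Consequently, $\lowerboundOf{H_s,H_t}$ equals this fixed vertical term plus the minimum total horizontal cost, which is exactly the cost of optimally transporting, within the line, the tiles of $H_s$ in each column $x$ to the target cells of $H_t$ in column $x$; that optimum equals $\sum_x \bigl|\sum_{x' \leq x}(s(x') - t(x'))\bigr|$ and is attained by the monotone coupling, where $s(x)$ and $t(x)$ count the tiles of $H_s$ and $H_t$, respectively, in column $x$. I would invoke \cref{obs:swapping} to make this precise: an MWPM whose carry paths cross can be uncrossed without increasing its weight, pinning down both its structure and its value.

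Building on this, the robot performs a sweep along the base while maintaining the invariant that already-visited columns are in their final south-facing $H_t$-shape, not-yet-visited columns are still in their original north-facing $H_s$-shape, and the intermediate configuration stays connected through an intact spine along the base. At each column it (i) \emph{flips} the tiles that remain in that column, peeling them off the top one at a time and restacking them downward below the base by monotone carries, and (ii) realizes the optimal transport flow by carrying each surplus tile along the (intact) base to the nearest still-deficient column and dropping it into place, deferring a column's deficit whenever it is to be covered by a source encountered later in the sweep. Every transported or flipped tile is carried only downward and sideways, along a monotone path whose length equals its $L_1$-distance, so the induced matching has weight $\lowerboundOf{H_s,H_t}$ and the schedule has optimal carry distance. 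For the makespan, the forward sweep along the spine costs $\BigO(n)$ empty moves, the per-column flipping work is charged against the vertical part of $\lowerboundOf{H_s,H_t}$, and the return trips for deferred deficits against its horizontal part, for a total of $\BigO(n + \lowerboundOf{H_s,H_t})$.

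The hard part will be the connectivity bookkeeping, made delicate by two complications: the transport flow along the base may change direction, so a single left-to-right sweep does not suffice and the robot must service the leftward- and rightward-flowing portions separately, possibly via more than one pass; and the bases of $H_s$ and $H_t$ need not coincide, so some base tiles have to be created or destroyed and the spine itself is being rebuilt. I would resolve this with a discipline in which a base tile lying outside the final support is removed only after all traffic relying on it has passed, while one inside the final support is placed early, exploiting that the untouched part of $H_s$ and the already-built part of $H_t$ are each connected and are joined through the current column. Verifying that every intermediate configuration -- including the transitions into and out of this phase -- remains connected under this ordering is the technical crux; the cost bounds then follow from the charging described above, with \cref{obs:swapping} certifying optimality of the carry distance.
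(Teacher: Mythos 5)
Your first paragraph is sound and is essentially a sharper version of what the paper establishes: routing every carry vertically to the shared base line, horizontally along it, and vertically again makes each carry an $L_1$-shortest path; the vertical contribution $\sum_{p\in H_s} y_p - \sum_{q\in H_t} y_q$ is the same for every perfect matching; and the horizontal contribution is minimized by the monotone west-to-east coupling, certified by uncrossing via \cref{obs:swapping}. The paper reaches the same conclusion by an iterated exchange argument on its concrete greedy matching rather than through the one-dimensional transportation identity, but the content is the same, and your formulation even pins down the value of $\lowerboundOf{H_s,H_t}$ explicitly.

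The gap is in the second half, and you have named it yourself without closing it: the schedule and its connectivity. Your sweep invariant (visited columns already in their final $H_t$-shape, unvisited columns still in their $H_s$-shape, intact spine) cannot be maintained, because the optimal flow $\sum_{x'\le x}(s(x')-t(x'))$ changes sign: a deficient column must be left unfinished until a surplus from further east reaches it, a surplus routed westward lands in an already ``finished'' column, and the spine itself is not static since base tiles of $H_s$ outside $B_t$ must be deleted and base tiles of $H_t$ outside $B_s$ created. The paper sidesteps the column invariant entirely: it fixes one global order on $H_s\setminus H_t$ and on $H_t\setminus H_s$ (west to east, ties broken north to south) and always moves the first remaining source tile to the first remaining target position. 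Under this order the occupied portion of the base line stays an interval throughout: a western base tile of $B_s\setminus B_t$ is removed only after everything above it is gone and nothing of $H_t$ hangs from it, while an eastern base position of $B_t\setminus B_s$ is created before the column below it is filled. The one remaining situation, where the westernmost base position $b_W$ lies in $B_t\setminus B_s$, is handled by first extending the base westward with the initial source tiles, which is still consistent with the same matching by the crossing argument. Your proposal needs this (or an equivalent) explicit ordering and case analysis to become a complete proof; once the schedule is pinned down, your cost accounting --- empty moves bounded by the return paths summing to $\lowerboundOf{H_s,H_t}$ plus $\BigO(n)$ for moving between consecutive, ordered pickups --- goes through exactly as in the paper.
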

\begin{proof}
    We first describe how the schedule $S$ can be constructed and
    then argue that the matching induced by $S$ is an MWPM in~$\bipGraph{H_s, H_t}$.
    We conclude the proof by showing that the schedule is of size $\BigO(n + \lowerboundOf{H_s,H_t})$.

    We write $B_s$ and $B_t$ for the set of all base positions in $H_s$ and $H_t$, respectively, and denote their union by $B \coloneqq B_s \cup B_t$.
    We assume that all robot movements between positions in $H_s$ and $H_t$ are realized along a path that moves vertically until $B$ is reached, continues horizontally on $B$ and then moves vertically to the target position.
    These are shortest paths by construction.

    We denote the westernmost and easternmost position in $B$ by $b_W$ and $b_E$, respectively.
    Without loss of generality, we assume $b_E \in B_t$, because this condition can always be reached by either mirroring the instance horizontally or swapping $B_s$ and $B_t$ and applying the resulting schedule in reverse.

    Assume $b_W \in B_s$.
    We order all tiles in both $H_s \setminus H_t$ and $H_t \setminus H_s$ from west to east and north to south, see \cref{fig:hist-to-hist-greedy-a}.
    Based on this ordering, $S$ is created by iteratively moving the first remaining tile in $H_s \setminus H_t$ to the first remaining position in $H_t \setminus H_s$.
    This way, all moves can be realized on shortest paths over the bases.

    If $b_W \in B_t \setminus B_s$, this is not possible as the westernmost position in $H_t$ may not yet be reachable.
    Therefore, we iteratively extend $B_s$ to the west until the western part of $B_t$ is constructed.
    The tiles are taken from $H_s$ in the same order as before, see~\cref{fig:hist-to-hist-greedy-b}.
    Once we have moved a tile to $b_W$, we can continue as above with $b_W \in B_s$.

	\begin{figure}[htb]
        \subcaptionsetup{justification=centering}%
		\begin{subfigure}[t]{0.5\columnwidth}
			\centering%
			\includegraphics[page=1]{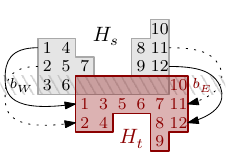}%
			\caption{}
			\label{fig:hist-to-hist-greedy-a}
		\end{subfigure}%
		\begin{subfigure}[t]{0.5\columnwidth}
			\centering%
			\includegraphics[page=2]{hist-to-hist-greedy-more-annotations}%
			\caption{}
			\label{fig:hist-to-hist-greedy-b}
		\end{subfigure}%
		\caption{%
            Position orderings for $H_s \rightrightarrows H_t$.
            Tiles are moved according to the ordering, as indicated by the arrows.
            \textbf{\textsf{(a)}}~The main case of $b_W \in B_s$.
            \textbf{\textsf{(b)}}~Additional steps are required if $b_W \in B_t \setminus B_s$.
        }
		\label{fig:hist-to-hist-greedy}
	\end{figure}

    Let $M$ be the matching induced by $S$.
    We now give an iterative argument why $M$ has the same sum of distances as an arbitrary MWPM $\minMatching$ with $\weight{\minMatching} = \lowerboundOf{H_s,H_t}$.
    First consider $b_W \in B_s$.
    Given the ordering from above, we write $s_i$ and $t_i$ for the $i$th tile position in $H_s \setminus H_t$ and $H_t \setminus H_s$, respectively.
    If $M \neq \minMatching$, there exists a minimal index~$i$ such that $(s_i, t_i) \in M \setminus \minMatching$.
    In $\minMatching$, $s_i$ and $t_i$ are instead matched to other positions $s_j$ and~$t_k$ with~$j, k > i$, i.e., $(s_i, t_k),(s_j, t_i)\in \minMatching$.
    Since $j, k > i$, the positions $s_j$ and $t_k$ do not lie to the west of $s_i$ and $t_i$, respectively.
    Now note that the paths for $(s_i, t_k)$ and $(s_j, t_i)$ always cross:
    Let $b_s$ and $b_t$ be the positions in $B$ that have same $x$-coordinates as $s_i$ and $t_i$, respectively.
    If $s_i$ lies to the east of $t_i$ (case~i), the path from $s_j$ to $t_i$ crosses~$b_s$, and so does any path that starts in $s_i$.
    If $s_i$ does not lie to the east of $t_i$ (case~ii), the path from $s_i$ to $t_k$ crosses $b_t$, and so does any path that ends in $t_i$.
    We illustrate both cases in \cref{fig:hist-to-hist-crossing}.

    \begin{figure}[htb]
        \subcaptionsetup{justification=centering}%
        \begin{subfigure}[t]{0.5\columnwidth}
            \centering%
            \includegraphics[page=1]{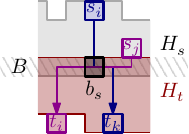}%
            \caption{}%
        \end{subfigure}%
        \begin{subfigure}[t]{0.5\columnwidth}
            \centering%
            \includegraphics[page=2]{hist-to-hist-crossing.pdf}%
            \caption{}%
        \end{subfigure}%
        \caption{%
            The paths for $(s_i, t_k)$ and $(s_j, t_i)$ cross in $b_s$ or $b_t$.
            In \textbf{\textsf{(a)}}, $s_i$ lies to the east of $t_i$ (case~i).
            In \textbf{\textsf{(b)}}, $s_i$ does not lie to the east of $t_i$ (case~ii).
        }
        \label{fig:hist-to-hist-crossing}
    \end{figure}

    As the paths cross, replacing $(s_i, t_k), (s_j, t_i)$ with $(s_i, t_i), (s_j, t_k)$ in $\minMatching$ creates a new matching $\minMatching'$ with $\weight{\minMatching'} \leq \weight{\minMatching}$ due to \cref{obs:swapping}.
    By repeatedly applying this argument, we turn $\minMatching$ into $M$ and note that $\weight{M} \leq \weight{\minMatching}$ holds.
    If $b_W \in B_t \setminus B_s$, case~(i) applies to all tiles moved to extend $B_s$ in western direction, and we can construct $M$ from $\minMatching$ analogously.
    Thus, the matching induced by $S$ is an MWPM.

    Since the carry distance $\cDist{S}$ is exactly $\lowerboundOf{H_s,H_t}$, it remains to bound the empty distance $\eDist{S}$ with $\BigO(n + \lowerboundOf{H_s,H_t})$.
    To this end, we combine all paths from one drop-off position back to its pickup position (in sum $\lowerboundOf{H_s,H_t}$) with all paths between consecutive pickup locations (in sum $\BigO(n)$ because the locations are ordered).
    Since the robot does not actually return to the pickup location, these paths of total length $\BigO(n + \lowerboundOf{H_s,H_t})$ are an upper bound for $\eDist{S}$, and we get $\norm{S} \in \BigO(n + \lowerboundOf{H_s,H_t})$.
\end{proof}

\subsection{Correctness of the algorithm}
\label{sec:makespan}
In the previous sections, we presented schedules for each phase of the overall algorithm.
We~will now leverage these insights to prove the main result of this section, restated here.

\thmReconfigTwoScaled*
\begin{proof}
    By \cref{lem:turn-to-histo,lem:hist-to-hist}, the makespan of the three phases is bounded by, respectively,
    \begin{equation*}
        \BigO(n + \lowerboundOf{C_s,H_s}),\, \BigO(n + \lowerboundOf{H_s,H_t}),\, \BigO(n + \lowerboundOf{H_t,C_t}),
    \end{equation*}
    which we now bound by $\BigO(\lowerboundOf{C_s,C_t})$, proving asymptotic optimality for $C_s \rightrightarrows C_t$.

    Clearly, $n \in \BigO(\lowerboundOf{C_s,C_t})$, as each of the~$n$ tiles has to be moved due to $C_s\cap C_t=\varnothing$.
    We prove a tight bound on the remaining terms, i.e.,
    \begin{equation}
        \label{eq:lower_bound_sum}
        \lowerboundOf{C_s,H_s} + \lowerboundOf{H_s,H_t} + \lowerboundOf{H_t,C_t} = \lowerboundOf{C_s,C_t}.
    \end{equation}

    In {\sffamily Phase (I)}, tiles are moved exclusively toward the bounding box of~$C_t$ along shortest paths to obtain~$H_s$; therefore, $\lowerboundOf{C_s,C_t} = \lowerboundOf{C_s,H_s} + \lowerboundOf{H_s,C_t}$.
    The same applies to the reverse of {\sffamily Phase (III)}, i.e., ${\lowerboundOf{C_t,H_s} = \lowerboundOf{C_t,H_t} + \lowerboundOf{H_t,H_s}}$.
    As the lower bound is symmetric, \cref{eq:lower_bound_sum} immediately follows.
    The total makespan is thus ${\BigO(\lowerboundOf{C_s,C_t})=\BigO(\OPT)}$.
\end{proof}

\Cref{lem:turn-to-histo,lem:hist-to-hist} move tiles on shortest paths and establish schedules that minimize the carry distance.
\Cref{eq:lower_bound_sum} ensures that the combined paths remain shortest possible with regard to $C_s\rightrightarrows C_t$.
Thus, the provided schedule has optimal carry distance.
In~particular, we obtain an optimal schedule when $\lambda = 0$, which corresponds to the case where the robot incurs no cost for movement when not carrying~a~tile.

\begin{corollary}
    An optimal schedule for~${C_s\rightrightarrows C_t}$ can be computed efficiently for any two $2$-scaled configurations ${C_s,C_t\in \configs(n)}$ with disjoint bounding boxes and ${\lambda=0}$.
\end{corollary}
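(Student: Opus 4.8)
The plan is to reduce the claim to the carry-distance optimality that \cref{lem:turn-to-histo} and \cref{lem:hist-to-hist} already provide. When $\lambda = 0$, the weighted makespan of any schedule $S$ collapses to $\norm{S} = \cDist{S}$. Since $\lowerboundOf{C_s,C_t}$, the weight of an MWPM in $\bipGraph{C_s,C_t}$, is a lower bound on $\cDist{S}$ for every schedule $S$ (see \cref{subsec:preliminaries-algorithm}), we get $\OPT \geq \lowerboundOf{C_s,C_t}$, with equality precisely when some schedule attains carry distance $\lowerboundOf{C_s,C_t}$. It therefore suffices to exhibit one such schedule that can be computed efficiently.

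I would take the three-phase schedule $S$ constructed in the proof of \cref{thm:reconfig-2-scaled}. \Cref{lem:turn-to-histo} gives a schedule for $C_s \rightrightarrows H_s$ of carry distance $\lowerboundOf{C_s,H_s}$; \cref{lem:hist-to-hist} gives one for $H_s \rightrightarrows H_t$ of carry distance $\lowerboundOf{H_s,H_t}$; and {\sffamily Phase (III)}, which runs \cref{lem:turn-to-histo} in reverse, gives one for $H_t \rightrightarrows C_t$ of carry distance $\lowerboundOf{H_t,C_t}$, using that the lower bound is symmetric. Concatenating the three yields a valid schedule for $C_s \rightrightarrows C_t$ whose carry distance is $\lowerboundOf{C_s,H_s} + \lowerboundOf{H_s,H_t} + \lowerboundOf{H_t,C_t}$, which by \cref{eq:lower_bound_sum} equals $\lowerboundOf{C_s,C_t}$. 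Hence $\norm{S} = \cDist{S} = \lowerboundOf{C_s,C_t} \leq \norm{S'}$ for every schedule $S'$, so $S$ is optimal; and $S$ is produced in polynomial time because each of the three phases is.

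The step I expect to do the real work is the composition: each of \cref{lem:turn-to-histo} and \cref{lem:hist-to-hist} only certifies optimality against its own sub-instance, so \cref{eq:lower_bound_sum} is exactly the ingredient guaranteeing that the three locally carry-optimal schedules, once stitched together, waste no carry moves globally. A minor point to verify is that {\sffamily Phase (III)} inherits carry-distance optimality when \cref{lem:turn-to-histo} is executed in reverse; this is immediate, since that lemma moves tiles only along shortest paths and $\lowerboundOf{\cdot,\cdot}$ does not depend on orientation.
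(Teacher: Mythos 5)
Your proposal is correct and matches the paper's own argument: the paper likewise observes that \cref{lem:turn-to-histo,lem:hist-to-hist} yield carry-optimal schedules for each phase, invokes \cref{eq:lower_bound_sum} to conclude that the concatenated schedule has carry distance exactly $\lowerboundOf{C_s,C_t}$, and notes that for $\lambda=0$ the weighted makespan reduces to the carry distance. Your explicit check that {\sffamily Phase (III)} inherits optimality under reversal is a sensible addition but does not change the route.
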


    \section{Constant-factor approximation for general instances}\label{sec:bounded-stretch}

The key advantage of $2$-scaled instances is the absence of cut vertices, which simplifies the maintenance of connectivity during reconfiguration.
Therefore, the challenge with general instances lies in managing cut vertices.

Most parts of our previous method already work independent of the configuration scale.
The only modification required concerns \cref{lem:move-down}, as the polyomino may become disconnected while moving free components that are not $2$-scaled.
To preserve local connectivity during the reconfiguration, we utilize two auxiliary tiles as a patching mechanism; this technique is also employed in other models~\cite{akitaya2021universal,michail2019transformation}.

A key idea is a partitioning of the configuration into horizontal and vertical strips.
With this, our approach are as follows:
To simplify the arguments, we first assume that the robot is capable of holding up to two auxiliary tiles and show that \cref{lem:move-down} still holds for instances that are not $2$-scaled under this assumption.
By using auxiliary tiles, we are able to preserve connectivity while translating horizontal strips; see~\cref{fig:move-down-corridor}.
To~guarantee connectivity, the strips must be moved in a specific order; we resolve this constraint using a recursive strategy that systematically moves dependent strips in the correct sequence.

\begin{figure}[htb]
	\centering%
	\includegraphics[page=1]{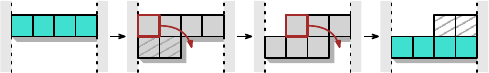}%
	\caption{Using two auxiliary tiles to retain connectivity while moving a unit-height strip south.}
	\label{fig:move-down-corridor}
\end{figure}

\begin{lemma}
	\label{lem:one-scaled-move-down}
	Let the robot hold two auxiliary tiles.
	Given a free component~$F$ on a polyomino~$P$,
	we can efficiently compute a schedule of makespan $\BigO(\norm{F})$ to translate~$F$ in the target direction by one unit.
\end{lemma}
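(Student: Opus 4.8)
The plan is to adapt the depth-first strip-sweeping argument of \cref{lem:move-down} to the non-scaled setting, where a single unit-height strip of $F$ may disconnect the configuration once its tiles are shifted south. The key structural observation is that moving a horizontal strip south by one unit keeps that strip internally connected, but may sever it from the strip immediately above it (which no longer rests on it) and from strips that hung below it. I would first partition $F$ into its maximal horizontal unit-height strips and record the adjacency structure between consecutive strips; this induces a natural ``support'' relation, where a strip $R$ depends on the strip(s) directly beneath it for connectivity to the rest of $P \setminus F$. The robot then processes strips in an order consistent with this dependency (a strip is only moved after everything that relies on it for connectivity has either been moved too or is anchored elsewhere), which is exactly the recursive ordering alluded to in the surrounding text.

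The core gadget is the two-auxiliary-tile patch depicted in \cref{fig:move-down-corridor}: before shifting a strip $R$ south by one, the robot deposits its two held tiles to bridge the unit-height gap that the shift would open between $R$'s old footprint and the tiles above, then performs the shift of $R$ tile-by-tile (walking along $R$, picking up the west/east-most tile and placing it at the newly vacated southern cell, analogous to the vertical-strip handling in \cref{lem:move-down} but now along a horizontal strip), and finally picks the two auxiliary tiles back up once $R$ is safely re-anchored below. I would argue connectivity is preserved at every intermediate step: during the shift of a single strip the auxiliary tiles plus the not-yet-moved portion of $R$ keep both the moved portion and the rest of $P$ connected, and the dependency-respecting order guarantees that strips below $R$ (which $R$ supported) have already been relocated or are independently anchored, so re-anchoring $R$ one unit south re-establishes global connectivity.

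For the makespan bound, the accounting mirrors \cref{lem:move-down}: the robot follows a depth-first walk across $F$ that visits only tiles possessing a northern tile-neighbor (now within each horizontal strip), each strip is handled once, and the carry work per strip — shifting its tiles south plus the $\BigO(1)$ extra moves to place and retrieve the two auxiliary tiles — is linear in the strip's size. Summing over all strips and adding the walk length gives $\BigO(\norm{F})$; the recursive descent into dependent strips only reorders this work without increasing its total, since each tile of $F$ is moved a bounded number of times.

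The main obstacle I expect is making the dependency ordering precise and proving it is always realizable, i.e., that the support relation among horizontal strips of a single free component is acyclic (or can be made so) so that a valid processing order exists, and simultaneously that following this order never strands the robot on a component it has just disconnected from the bulk. Handling strips that split into several horizontal segments after a shift — so that one physical strip spawns multiple sub-components with different dependencies — is the delicate case; I would deal with it by re-running the partition-and-order step recursively on each newly formed sub-component, charging the extra bookkeeping to the tiles involved, and arguing the recursion depth and total work stay within $\BigO(\norm{F})$ because every recursive call operates on a strictly smaller tile set.
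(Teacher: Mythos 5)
There is a genuine gap, and it sits exactly where you flag your own uncertainty. Your decomposition into maximal \emph{horizontal} unit-height strips creates vertical inter-element dependencies that your ordering cannot resolve: if a strip $R'$ rests on a strip $R$ in overlapping columns, then $R'$ physically cannot descend until $R$ has vacated the cells below it (they are occupied), so the move order must be bottom-up; but your connectivity-based ``support'' order prescribes moving $R'$ (the dependent) before $R$ (its support), which is top-down. As stated, the two requirements deadlock on any vertical stack. Even setting that aside, the issues you defer --- acyclicity of the dependency relation, strips splitting into several sub-segments after a shift, and whether the recursive re-partitioning stays within $\BigO(\norm{F})$ --- are the substance of the proof, and ``every recursive call operates on a strictly smaller tile set'' gives termination but not a linear bound (a naive charge yields $\BigO(\norm{F}^2)$ in the worst case).

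The paper resolves all of this with a different decomposition: $F$ is cut into maximal \emph{vertical} strips of unit width (each of height at least $2$), and only the degenerate single-tile columns are grouped into horizontal corridors. A vertical strip translates south by relocating a single tile from its top to its bottom (one auxiliary tile placed below, northernmost tile removed), so no element is ever vertically adjacent to another element of the decomposition --- all adjacencies are east/west, and the occupied-target-cell problem never arises. The ``blocking'' relation ($A$ blocks $B$ if shifting $B$ south severs $A$ from $B$) is then provably never mutual, because at least one of any two adjacent elements is a strip of height at least $2$. This non-mutuality is the key fact that makes the recursive spanning-tree traversal (recurse on unmoved blockers, move the current element, recurse on the remaining unvisited neighbors) well defined and ensures each element is processed exactly once, giving the $\BigO(\norm{F})$ bound directly. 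Your two-auxiliary-tile patching gadget matches the paper's handling of corridors, but without switching to the vertical-strip decomposition the ordering argument does not go through.
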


\begin{proof}
	Without loss of generality, we translate $F$ in southern direction.
	We~decompose $F$ into maximal vertical \emph{strips} of unit width.
	Strips that only consist of a single tile are grouped with adjacent single tiles (if they exist) and become maximal horizontal \emph{corridors} of unit height.
	This yields a decomposition of the free component $F$ into strips and corridors, that we collectively refer to as the \emph{elements} $V$ of $F$, see~\cref{fig:one-scaled-decomposition-b}.
	In this construction, each corridor is adjacent to up to two strips and each strip has a height of at least~2.
	All adjacency is either to the west or east side, but never vertical.
	Note that because~$F$ is connected, the graph $G = (V, E)$ of adjacent elements is connected as well.

	\begin{figure}[htb]
		\subcaptionsetup{justification=centering}%
		\begin{subfigure}[t]{\columnwidth/3}
			\centering%
			\includegraphics[page=1]{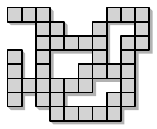}%
			\caption{}%
			\label{fig:one-scaled-decomposition-a}%
		\end{subfigure}%
		\begin{subfigure}[t]{\columnwidth/3}
			\centering%
			\includegraphics[page=2]{decomposition-one-scaled-short-0.85}%
			\caption{}%
			\label{fig:one-scaled-decomposition-b}%
		\end{subfigure}%
		\begin{subfigure}[t]{\columnwidth/3}
			\centering%
			\includegraphics[page=3]{decomposition-one-scaled-short-0.85}%
			\caption{}%
			\label{fig:one-scaled-decomposition-c}%
		\end{subfigure}%
		\caption{%
			A decomposition of a polyomino \textbf{\textsf{(a)}} into gray vertical strips and cyan horizontal corridors \textbf{\textsf{(b)}}.
			From that, we \textbf{\textsf{(c)}} create the adjacency graph $G$ and (by leaving out the dotted edges) a spanning tree $T$.
		}
		\label{fig:one-scaled-decomposition}
	\end{figure}

	We now compute a spanning tree $T$ from~$G$.
	For the remainder of the proof, adjacency of elements is considered to be in $T$.

	For two adjacent elements $A, B \in V$, we say that $A$ \emph{blocks} $B$, if translating all tiles in~$B$ south by one unit would result in all tiles of $A$ losing adjacency to $B$.
	An example for an element $A$ blocking and element $B$ is illustrated in \cref{fig:one-scaled-decomposition-b}: if we translate~$B$ south, we disconnect $A$ from the polyomino.
	Note that $A$ blocks $B$ exactly if the northernmost tile in~$B$ has the same height as the southernmost tile in $A$.
	As at least one of $A$ or $B$ is a strip (with height at least~2), $A$ and $B$ cannot block one another simultaneously.
	Because we translate every tile south by only a single unit, we do not need to update~$T$,
	which means that if $A$ blocks $B$, then once $A$ has been moved south one step, $B$ can be moved south as well without~$A$ and~$B$ losing connectivity.
	If $A$ and $B$ are not blocking one another, they can be moved south in any order.

	We next look at how to move an element $A$ south in the case that $A$ is not blocked by any adjacent element.

	We use two auxiliary tiles which the robot holds at the beginning and end of the operation.
	If $A$ is a strip or a single-tile corridor, we place one tile south of $A$ and then remove the northernmost tile.
	If $A$ is a corridor of two or more tiles, we place the two auxiliary tiles right below the two westernmost tiles of $A$.
	We then move each tile by two steps to the east and one position south, starting from the east.
	The last two tiles become new auxiliary tiles; see \cref{fig:move-down-corridor}.
	Clearly, moving $A$ south has makespan $\BigO(\norm{A})$.

	Finally, the robot has to traverse $T$ such that every element is moved south only after its adjacent blocking elements.
	This can be done recursively, starting at any element in $T$ in the following way.
	At the current element $A$:
	
	\begin{enumerate}
		\item Recurse on all adjacent elements that block $A$ and are not moved south yet.
		\item Move $A$ south.
		\item Recurse on the remaining unvisited elements adjacent to $A$.
	\end{enumerate}

	Because $T$ is connected, every element is visited with this strategy.
	By ordering the adjacent elements, each recursion on an element $A$ requires only $\BigO(\norm{A})$ additional movements to move toward the corresponding elements.

	It remains to show that no element is recursed on more than once.
	Assume that some element $B$ is recursed on twice.
	As there are no cycles in $T$ and all unvisited or blocking neighbors are recursed on in each invocation, $B$ can only be recursed on for a second time by some element $A$ which was previously recursed on by $B$.
	This second invocation stems from the first step of handling $A$ because the third step excludes visited elements, which means that $B$ blocks $A$, and $B$ is not moved south.
	Thus, the first invocation on $B$ did not yet reach the second step.
	In other words, the recursive call on $A$ resulted from $A$ blocking $B$.
	This is a contradiction, as $A$ and $B$ cannot mutually block one another.
\end{proof}

We are now ready to establish our main result for general instances.
For this, it suffices to demonstrate how the auxiliary tiles can be obtained (any two tiles that do not break connectivity can serve this purpose) and that carrying two tiles can be emulated by sequentially carrying one tile after the other.

\begin{theorem}
	\label{thm:reconfig-general}
	For any $\lambda \in [0, 1]$, there exists a constant $c$ such that for any two configurations ${C_s,C_t\in \configs(n)}$ with disjoint bounding boxes, we can efficiently compute a schedule for $C_s\rightrightarrows C_t$ with weighted makespan at most $c\cdot \OPT$.
\end{theorem}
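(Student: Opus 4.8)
The plan is to run the three-phase algorithm of \cref{thm:reconfig-2-scaled} essentially unchanged: Phase~(II) (\cref{lem:hist-to-hist}) never uses the $2$-scaled assumption, and the bookkeeping in the proofs of \cref{lem:turn-to-histo} and of \cref{thm:reconfig-2-scaled} carries over verbatim as soon as the per-step translation subroutine has makespan $\BigO(\norm{F})$ for a free component~$F$ — which is exactly what \cref{lem:one-scaled-move-down} supplies (moving components by one unit at a time instead of two). Hence the only real work is to discharge the two assumptions under which \cref{lem:one-scaled-move-down} was stated: \textbf{(a)}~the robot already holds two auxiliary tiles, and \textbf{(b)}~the robot may hold two tiles simultaneously.

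For~\textbf{(a)}, assume $n$ exceeds a small constant (the finitely many smaller instances contain essentially no cut vertices and are handled directly). Fix a spanning tree $T$ of $C_s$ and pick two distinct leaves $\ell_1,\ell_2$; then $T\setminus\{\ell_1,\ell_2\}$ is still connected, so removing $\ell_1$ and $\ell_2$ — together or one after the other — never disconnects $C_s$. Before Phase~(I) the robot walks to $\ell_1$, picks it up, walks to $\ell_2$, and picks it up as well, so it begins the algorithm holding two auxiliary tiles; since each invocation of \cref{lem:one-scaled-move-down} returns the robot holding two tiles again (the ``last two tiles'' it ends with), this invariant persists through Phases~(I)--(III), the auxiliaries being carried along the covering walks between subroutine calls. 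At the end, $C_t$ is missing exactly two positions; arranging for these to be two leaves of a BFS tree of $C_t$, the robot deposits its two carried tiles there in a connectivity-preserving way. Collecting and finally placing the auxiliaries costs $\BigO(n)$, and carrying two rather than zero tiles along the walks and inside \cref{lem:one-scaled-move-down} multiplies only already-incurred costs by a constant; since all of these are $\BigO(n+\lowerboundOf{C_s,C_t})$ and $n\in\BigO(\lowerboundOf{C_s,C_t})$ (disjoint bounding boxes force every tile to move), the total makespan stays $\BigO(\lowerboundOf{C_s,C_t})$.

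For~\textbf{(b)}, note that in the subroutine of \cref{lem:one-scaled-move-down} the two auxiliary tiles are always used as temporary supports at positions within $\BigO(1)$ of one another, and between subroutine calls they travel together along the covering walk; ``holding two tiles'' is never needed for anything more. A one-tile robot therefore emulates the two-tile robot by \emph{leapfrogging}: it carries one tile forward by $\BigO(1)$, sets it down at a nearby position whose removal keeps the polyomino connected, returns $\BigO(1)$ for the other, and repeats, so that every unit of forward progress of the conceptual two-tile robot is simulated by $\BigO(1)$ moves of the one-tile robot (and every single drop-off or pickup of a two-tile configuration by $\BigO(1)$ extra walking to and from the parked companion). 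Summed over the whole execution this is a constant-factor increase over the two-tile makespan of $\BigO(n+\lowerboundOf{C_s,C_t})$. Putting \textbf{(a)} and \textbf{(b)} together, \cref{lem:turn-to-histo}, \cref{cor:translation} and \cref{lem:hist-to-hist} all hold for arbitrary (not necessarily $2$-scaled) configurations with the same asymptotic guarantees, so the argument of \cref{thm:reconfig-2-scaled} yields a schedule for $C_s\rightrightarrows C_t$ of weighted makespan $\BigO(\lowerboundOf{C_s,C_t})$; since $\lowerboundOf{C_s,C_t}\le\OPT$ always (it lower-bounds the carry distance of any schedule), this is $\BigO(\OPT)$, giving the claimed constant $c$.

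I expect the main obstacle to be the connectivity bookkeeping hidden in \textbf{(a)} and \textbf{(b)}: one must verify that each re-parking of an auxiliary tile, the two initial deletions $\ell_1,\ell_2$, and the two final placements all leave every intermediate polyomino connected — in particular that a connectivity-safe parking spot always exists within $\BigO(1)$ of the current work site, using the strip/corridor decomposition from \cref{lem:one-scaled-move-down} — and that none of these extra passengers inflates the makespan beyond the $\BigO(\lowerboundOf{C_s,C_t})$ budget.
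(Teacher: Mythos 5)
Your proposal is correct and follows essentially the same route as the paper: substitute \cref{lem:one-scaled-move-down} for \cref{lem:move-down}, harvest two auxiliary tiles from spanning-tree leaves at cost $\BigO(n)$, shuttle them along the covering walk, and emulate the two-tile robot with a one-tile robot at constant overhead (the paper parks the spare auxiliary at a free position in the target direction of the next element, which is the same leapfrogging idea), concluding with the unchanged makespan analysis of \cref{thm:reconfig-2-scaled}. Your explicit treatment of the endgame (depositing the two auxiliaries at leaf positions of $C_t$) is a detail the paper leaves implicit, but it does not change the argument.
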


\begin{proof}
	The idea is to execute \cref{lem:one-scaled-move-down} instead of \cref{lem:move-down} every time it is required in \cref{thm:reconfig-2-scaled}.
	To this end, we explain the required modifications for \cref{lem:turn-to-histo}; the same adaptations also work for $H_s \rightrightarrows H_s'$.
	Before we translate the first free component $F$, we move two additional tiles to $F$.
	These tiles can be any tiles from $C_s$ that are not necessary for connectivity (e.g., any leaf from a spanning tree of the dual graph of $C_s$).
	Obtaining these auxiliary tiles requires $\BigO(n)$ movements.
	Next, we sequentially move both tiles from one free component to the next whenever we continue our walk on the configuration.
	We can bound these steps with the length of the walk, thus requiring~$\BigO(n)$ movements in total.

	Now, we only have to argue that \cref{lem:one-scaled-move-down} still holds if the robot initially carries a single tile and must not carry more than one tile at a time:
	Whenever the robot finishes translating the current element, it only picks up one of the two auxiliary tiles and moves it to the next unvisited element.
	This tile is placed in an adjacent position to the target direction of this element (which must be free).
	The robot then moves back and picks up the other auxiliary tile.
	As this extra movement requires~$\BigO(\norm{F})$ additional moves in total, the makespan of \cref{lem:one-scaled-move-down} is still in~$\BigO(\norm{F})$.

	With these adaptations, we can apply \cref{lem:one-scaled-move-down} instead of \cref{lem:move-down} in \cref{thm:reconfig-2-scaled}.
	Moreover, the makespan of \cref{lem:turn-to-histo} is still in $\BigO(n + \lowerboundOf{C_s,H_s})$.
	By that, the analysis of the overall makespan is identical to \cref{thm:reconfig-2-scaled}.
	The only difference is that schedules due to \cref{lem:one-scaled-move-down} do not necessarily have optimal carry distance, which also applies to the overall~schedule.
\end{proof}

    \section{Discussion}\label{sec:discussion}

We presented progress on a reconfiguration problem for two-dimensional tile-based structures (i.e., polyominoes) within abstract material-robot systems.
In particular, we showed that the problem is \NP-hard for any weighting between moving with or without carrying a tile.

Complementary to this negative result, we developed an algorithm to reconfigure two polyominoes into one another in the case that both configurations are contained in disjoint bounding boxes.
The computed schedules are within a constant factor of the optimal reconfiguration schedule.
It is easy to see that our approach can also be used to construct a polyomino, rather than reconfigure one into another.
Instead of deconstructing a start configuration to generate building material, we can assume that tiles are located within a ``depot'' from which they can be picked up.
Performing the second half of the algorithm works as before and builds the target configuration out of tiles from the depot.

Several open questions remain.
A natural open problem is to adapt the approach to instances in which the bounding boxes of the configurations intersect, i.e., they overlap, or are nested.
This hinges on proving a good lower bound on the makespan of any such schedule.
Note that a minimum-weight perfect matching does not provide a reliable lower bound in this setting, as the cost of an optimal solution can be arbitrarily larger; particularly in cases involving small matchings, as visualized in~\cref{fig:rotate-c-matching}.
\begin{figure}[htb]
    \centering
    \includegraphics{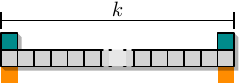}
    \caption{
        Although the MWPM has weight~$4$, the robot is required to perform at least $k - 1 \gg 4$ empty moves.
        This discrepancy shows that the MWPM significantly underestimates the true cost, making it unsuitable as a lower bound for instances with intersecting bounding boxes.
    }
    \label{fig:rotate-c-matching}
\end{figure}

However, an alternative would be to provide an algorithm that achieves worst-case optimality.
Furthermore, the more general question about three dimensional settings remain open.
Our methods could likely be generalized for parallel execution by multiple robots.
More intricate is the question on whether a fully distributed approach is possible.

    \bibliographystyle{plainurl}
    \bibliography{bibliography}

\end{document}